\numberwithin{equation}{section}
\newcommand{\Rmnum}[1]{\expandafter\@slowromancap\romannumeral #1@}
\newtheorem{theorem}{Theorem}[section]
\newtheorem{lemma}[theorem]{Lemma}
\newtheorem{corollary}[theorem]{Corollary}
\theoremstyle{plain}
\newtheorem{definition}[theorem]{Definition}
\newtheorem{remark}[theorem]{Remark}
\newenvironment{proof}{\noindent{\textbf{\emph{Proof.}}}}
\begin{document}

\title{\Large Special matrices over finite fields and their applications to quantum error-correcting codes}

\author[1]{{\small{Meng Cao}} \thanks{E-mail address: mengcaomath@126.com}}
\affil[1]{\footnotesize{Yanqi Lake Beijing Institute of Mathematical Sciences and Applications (BIMSA), Beijing, 101408, China} }
\renewcommand*{\Affilfont}{\small\it}

\date{}
\maketitle

{\linespread{1.4}{

\vskip -6mm

\noindent {\small{{\bfseries{Abstract:}}
The matrix-product (MP) code $\mathcal{C}_{A,k}:=[\mathcal{C}_{1},\mathcal{C}_{2},\ldots,\mathcal{C}_{k}]\cdot A$
with a non-singular by column (NSC) matrix $A$ plays an important role in constructing good quantum error-correcting codes.
In this paper, we study the MP code when the defining matrix $A$ satisfies the condition that $AA^{\dag}$ is $(D,\tau)$-monomial.
We give an explicit formula for calculating the dimension of the Hermitian hull of a MP code.
We provide the necessary and sufficient conditions that a MP code is Hermitian dual-containing (HDC), almost Hermitian dual-containing (AHDC),
Hermitian self-orthogonal (HSO), almost Hermitian self-orthogonal (AHSO), and Hermitian LCD, respectively.
We theoretically determine the number of all possible ways involving the relationships among the constituent codes to yield a MP code
with these properties, respectively.
We give alternative necessary and sufficient conditions for a MP code to be AHDC and AHSO, respectively,
and show several cases where a MP code is not AHDC or AHSO.
We provide the construction methods of HDC and AHDC MP codes, including those with optimal minimum distance lower bounds.
}}

\vspace{6pt}

\noindent {\small{{\bfseries{Keywords:}} Matrix-product (MP) codes; Monomial matrices; Non-singular by column (NSC) matrices; Quantum error-correcting codes}}

\vspace{6pt}
\noindent {\small{{\bfseries{Mathematics Subject Classification (2010):}} 11T71,  \ \ 81P45, \ \ 81P70, \ \  94B05}}}

\section{Introduction}\label{section1}
Denote by $\mathbb{F}_{q}$ (resp. $\mathbb{F}_{q^{2}}$) denote a finite field with $q$ (resp. $q^{2}$) elements, where $q$ is a prime power.
Denote by $\mathcal{M}_{k\times t}(\mathbb{F}_{q})$ (resp. $\mathcal{M}_{k\times t}(\mathbb{F}_{q^{2}})$) the set of all $k\times t$ matrices over
$\mathbb{F}_{q}$ (resp. $\mathbb{F}_{q^{2}}$).
It is denoted as $\mathcal{M}_{k}(\mathbb{F}_{q})$ (resp. $\mathcal{M}_{k}(\mathbb{F}_{q^{2}})$) if $k=t$.
Matrix-product (MP) codes are a type of long classical codes formed by combining several commensurate constituent codes
with a defining matrix.

\begin{definition}\label{definition1.1}
(\!\!\cite{O2002Note,Blackmore2001Matrix-product})
Let $\mathcal{C}_{1},\mathcal{C}_{2},\ldots,\mathcal{C}_{k}$ be $q$-ary linear codes of length $n$.
Let $A=(a_{i,j})\in \mathcal{M}_{k\times t}(\mathbb{F}_{q})$ with $k\leq t$.
A \emph{matrix-product (MP) code}, denoted as
$\mathcal{C}(A):=[\mathcal{C}_{1},\mathcal{C}_{2},\ldots,\mathcal{C}_{k}]\cdot A$,
is defined as the set of all matrix-products $[\mathbf{c}_{1},\mathbf{c}_{2},\ldots,\mathbf{c}_{k}]\cdot A$ with $\mathbf{c}_{i}=(c_{1,i},c_{2,i},\ldots,c_{n,i})^{\top}\in\mathcal{C}_{i}$ for $i=1,2,\ldots,k$, where
$A$ is called the \emph{defining matrix}, and $\mathcal{C}_{1},\mathcal{C}_{2},\ldots,\mathcal{C}_{k}$ are called the \emph{constituent codes}.
\end{definition}

A typical codeword $\mathbf{c}=[\mathbf{c}_{1},\mathbf{c}_{2},\ldots,\mathbf{c}_{k}]\cdot A$ of $\mathcal{C}(A)$ is an $n\times t$ matrix with the following form:
\begin{align*}
\mathbf{c}=\left(
\begin{array}{cccc}
\sum_{i=1}^k c_{1,i}a_{i,1}& \sum_{i=1}^k c_{1,i}a_{i,2}& \cdots &\sum_{i=1}^k c_{1,i}a_{i,t}\\
\sum_{i=1}^k c_{2,i}a_{i,1}& \sum_{i=1}^k c_{2,i}a_{i,2}& \cdots &\sum_{i=1}^k c_{2,i}a_{i,t}\\
\vdots&\vdots&\ddots&\vdots\\
\sum_{i=1}^k c_{n,i}a_{i,1}& \sum_{i=1}^k c_{n,i}a_{i,2}& \cdots &\sum_{i=1}^k c_{n,i}a_{i,t}\\
\end{array}\right).
\end{align*}
Reading the entries of the $n\times t$ matrix above in column-major order, any codeword of $\mathcal{C}(A)$ can be viewed as a row vector of length $tn$, i.e.,
$\mathbf{c}=\big[\sum_{i=1}^k a_{i,1}\mathbf{c}_{i},\sum_{i=1}^k a_{i,2}\mathbf{c}_{i},\ldots,\sum_{i=1}^k a_{i,t}\mathbf{c}_{i}\big]$,
where $\mathbf{c}_{i}=(c_{1,i},c_{2,i},\ldots,c_{n,i})\in\mathcal{C}_{i}$ is an $1\times n$ row vector for $i=1,2,\ldots,k$.
For simplicity, we use the notation $\mathcal{C}_{A,k}$ to represent the MP code $\mathcal{C}(A)$ when $A\in \mathcal{M}_{k}(\mathbb{F}_{q})$.

Scholars have increasingly focused on MP codes in recent years, with numerous intriguing properties of these codes being unveiled,
such as their minimum distances, duals, hulls, squares, decoding algorithms, etc.
It was shown in \cite{Blackmore2001Matrix-product,Hernando2009Construction,O2002Note} that the minimum distance (or minimum distance lower bound) of a MP code
is related to the properties of its defining matrix and constituent codes.
The Euclidean dual and Hermitian dual of a MP code were shown in \cite{Blackmore2001Matrix-product} and \cite{Zhang2015qc}, respectively.
In \cite{Liu2020Galois}, Liu and Pan investigated the $\ell$-Galois hull of MP codes over finite fields.
In \cite{Cao2023On}, Cao \emph{et al}. studied the $\sigma$ duals, $\sigma$ duals and intersections of MP codes over finite fields.
Cascudo \emph{et al}. \cite{Cascudo2020Squares} studied the squares of MP codes over finite fields.
In \cite{Hernando2009Construction,Hernando2012List,Hernando2013Decoding}, Hernando \emph{et al}. proposed the decoding algorithms for MP codes.
MP codes over different rings were explored in \cite{Fan2014Matrix,Fan2014Homogeneous,Asch2008Matrix,Cao2018Matrix}.
MP codes are very useful for constructing quantum codes (see, e.g., \cite{Galindo2015New,Cao2024On,Liu2018On})
and locally repairable codes (see, e.g., \cite{Chen2023New,Luo2022Three}).

In this paper, we aim to investigate some properties of MP code $\mathcal{C}_{A,k}$ whose defining matrix $A$ satisfies the condition that $AA^{\dag}$ is a $(D,\tau)$-monomial matrix, namely, $AA^{\dag}=DP_{\tau}$ for an invertible diagonal matrix $D$ and a permutation matrix $P_{\tau}$ under the permutation $\tau\in S _{k}$, where $S _{k}$ denotes the symmetric group of order $k$.

The remainder of this paper is organized as follows.
In Section 2, we give the preliminaries of this paper.
In Sections 3 and 4, we present an explicit formula for calculating the dimension of the Hermitian hull of $\mathcal{C}_{A,k}$.
Based on this, we present the necessary and sufficient conditions for a MP code $\mathcal{C}_{A,k}$ to be HDC, AHDC, HSO, AHSO and Hermitian LCD respectively.
From this framework, we theoretically determine the number of all possible ways involving the relationships among the constituent codes $\mathcal{C}_{1},\mathcal{C}_{2},\ldots,\mathcal{C}_{k}$ to yield a MP code $\mathcal{C}_{A,k}$ that is HDC, AHDC, HSO and AHSO, respectively.
We also give alternative necessary and sufficient conditions for $\mathcal{C}_{A,k}$ to be AHDC and AHSO, respectively,
and provide several cases where $\mathcal{C}_{A,k}$ is not AHDC or AHSO.
In Section 5, we give the construction methods of HDC and AHDC MP codes, including those with optimal minimum distance lower bounds.
In Section 6, we give the conclusion of this paper.

\section{Preliminaries}\label{section2}

For any matrix $A=[a_{i,j}]$ over $\mathbb{F}_{q^{2}}$, denote by $A^{(q)}:=\big[a_{i,j}^{q}\big]$ the \emph{conjugate matrix} of $A$
and denote by $A^{\dag}:=\big(A^{(q)}\big)^{\top}=\big[a_{j,i}^{q}\big]$ the \emph{transposed conjugate matrix} of $A$.

Let $[n,k,d]_{q^{2}}$ denote a $q^{2}$-ary linear code of length $n$, dimension $k$ and minimum distance $d$.
It is a $k$-dimensional subspace of the vector space $\mathbb{F}_{q^{2}}^{n}$, which satisfies the \emph{Singleton bound}, i.e., $d\leq n-k+1$.
If $d=n-k+1$, then it is called a \emph{maximum distance separable (MDS) code}.

For vectors $\mathbf{x}=(x_{1},x_{2},\ldots,x_{n})\in\mathbb{F}_{q^{2}}^{n}$ and $\mathbf{y}=(y_{1},y_{2},\ldots,y_{n})\in\mathbb{F}_{q^{2}}^{n}$,
their \emph{(Euclidean) inner product} and \emph{Hermitian inner product} is defined by
$\langle\mathbf{x},\mathbf{y}\rangle:=\sum_{i=1}^n x_i y_i$ and
$\langle\mathbf{x},\mathbf{y}\rangle_{H}:=\sum_{i=1}^n x_i y_i^{q}$, respectively.
The \emph{(Euclidean) dual code} and \emph{Hermitian dual code} of a liner code $\mathcal{C}$ of length $n$ over $\mathbb{F}_{q^{2}}$ is defined by
$\mathcal{C}^{\perp}:=\{\mathbf{x}\in\mathbb{F}_{q^{2}}^{n}|\langle\mathbf{x},\mathbf{y}\rangle=0 \ \mathrm{for} \ \mathrm{all} \  \mathbf{y}\in \mathcal{C}\}$ and
$\mathcal{C}^{\perp_{H}}:=\{\mathbf{x}\in\mathbb{F}_{q^{2}}^{n}|\langle\mathbf{x},\mathbf{y}\rangle_{H}=0 \ \mathrm{for} \ \mathrm{all} \  \mathbf{y}\in \mathcal{C}\}$, respectively.

We fix the following notations in the rest of this paper.
\vspace{-4pt}
\begin{itemize}
\setlength{\itemsep}{1pt}
\setlength{\parsep}{1pt}
\setlength{\parskip}{1pt}
\item $\mathbb{F}_{q}^{\ast}:=\mathbb{F}_{q}\backslash \{0\}$, $\mathbb{F}_{q^{2}}^{\ast}:=\mathbb{F}_{q^{2}}\backslash \{0\}$.

\item $A^{\top}$ denotes the transposed matrix of the matrix $A$.
For $A=(a_{i,j})$ over $\mathbb{F}_{q^{2}}$, denote by $A^{\dag}:=\big[a_{j,i}^{q}\big]$ the conjugate transposed matrix of $A$.

\item  A $k\times k$ matrix $A$ is called $\tau$-monomial if it can be written as $A=DP_{\tau}$, where $D$ is a invertible diagonal matrix, and
$P_{\tau}$ is a permutation matrix under $\tau$ satisfying that the $\tau(i)$-th row of $P_{\tau}$ is exactly the $i$-th row of the identity matrix $I_{k}$
for $i=1,\ldots,k$.

\item $J_{t}$ denotes the $t\times t$ flip matrix whose $(i,j)$-th entry is $1$ for all $i+j=t+1$ with $1\leq i,j\leq t$, and entry $0$, otherwise.

\item $\mathrm{Hull}_{H}(\mathcal{C}):=\mathcal{C}\cap\mathcal{C}^{\bot_{H}}$ is called the \emph{Hermitian hull} of the linear code $\mathcal{C}$.
\end{itemize}

\section{Several types of MP codes}
\begin{lemma}
(\!\!\cite{Guenda2020Linear})
Let $\mathcal{C}_{i}$ be an $[n,t_{i},d_{i}]_{q^{2}}$ code with a generator matrix $G_{i}$ for $i=1,2$. Then
$\mathrm{dim}(\mathcal{C}_{1}\cap\mathcal{C}_{2}^{\bot_{H}})=t_{1}-\mathrm{rank}(G_{1}G_{2}^{\dag})$.
\end{lemma}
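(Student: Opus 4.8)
The plan is to realize the intersection $\mathcal{C}_{1}\cap\mathcal{C}_{2}^{\perp_{H}}$ as the image, under a dimension-preserving isomorphism, of the left null space of the matrix $G_{1}G_{2}^{\dag}$, and then to read off its dimension by rank-nullity. Since $G_{1}\in\mathcal{M}_{t_{1}\times n}(\mathbb{F}_{q^{2}})$ has full row rank $t_{1}$, the map $\varphi:\mathbb{F}_{q^{2}}^{t_{1}}\to\mathcal{C}_{1}$ given by $\mathbf{u}\mapsto\mathbf{u}G_{1}$ is a linear isomorphism onto $\mathcal{C}_{1}$. Hence it suffices to describe the preimage $\varphi^{-1}(\mathcal{C}_{1}\cap\mathcal{C}_{2}^{\perp_{H}})=\{\mathbf{u}\in\mathbb{F}_{q^{2}}^{t_{1}}:\mathbf{u}G_{1}\in\mathcal{C}_{2}^{\perp_{H}}\}$ and compute its dimension, which equals $\mathrm{dim}(\mathcal{C}_{1}\cap\mathcal{C}_{2}^{\perp_{H}})$ because $\varphi$ preserves dimension.

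Next I would rewrite the Hermitian-dual membership condition as a single matrix equation. Writing the Hermitian inner product as $\langle\mathbf{x},\mathbf{y}\rangle_{H}=\mathbf{x}\big(\mathbf{y}^{(q)}\big)^{\top}$, a codeword $\mathbf{c}\in\mathcal{C}_{1}$ lies in $\mathcal{C}_{2}^{\perp_{H}}$ exactly when $\langle\mathbf{c},\mathbf{y}\rangle_{H}=0$ for every $\mathbf{y}\in\mathcal{C}_{2}$. Because the rows of $G_{2}$ span $\mathcal{C}_{2}$, and the set $\{\mathbf{y}:\langle\mathbf{c},\mathbf{y}\rangle_{H}=0\}$ is closed under $\mathbb{F}_{q^{2}}$-linear combinations (the $q$-power conjugation applied to the scalars does not affect the vanishing), this condition is equivalent to orthogonality against the rows of $G_{2}$ alone, that is, to $\mathbf{c}\big(G_{2}^{(q)}\big)^{\top}=\mathbf{c}G_{2}^{\dag}=\mathbf{0}$. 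Substituting $\mathbf{c}=\mathbf{u}G_{1}$ turns this into $\mathbf{u}\big(G_{1}G_{2}^{\dag}\big)=\mathbf{0}$, so $\varphi^{-1}(\mathcal{C}_{1}\cap\mathcal{C}_{2}^{\perp_{H}})$ is precisely the left null space of the $t_{1}\times t_{2}$ matrix $G_{1}G_{2}^{\dag}$.

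Finally I would apply rank-nullity to $G_{1}G_{2}^{\dag}$: its left null space has dimension $t_{1}-\mathrm{rank}(G_{1}G_{2}^{\dag})$. Combining this with the dimension-preserving isomorphism $\varphi$ yields $\mathrm{dim}(\mathcal{C}_{1}\cap\mathcal{C}_{2}^{\perp_{H}})=t_{1}-\mathrm{rank}(G_{1}G_{2}^{\dag})$, as claimed. The only point requiring genuine care is the second step: one must check that reducing ``orthogonal to all of $\mathcal{C}_{2}$'' to ``orthogonal to the rows of $G_{2}$'' remains legitimate despite the conjugate-linearity of the Hermitian form in its second argument. This is the main, albeit mild, obstacle, and it is resolved by observing that the vanishing of $\langle\mathbf{c},\cdot\rangle_{H}$ is still preserved under $\mathbb{F}_{q^{2}}$-linear combinations of the vectors $\mathbf{y}$, so that the rows of $G_{2}$ suffice.
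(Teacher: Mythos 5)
Your proof is correct; the paper itself gives no argument for this lemma, merely citing it from \cite{Guenda2020Linear}, and your derivation is the standard one used there. Identifying $\mathcal{C}_{1}\cap\mathcal{C}_{2}^{\perp_{H}}$ with the left null space of $G_{1}G_{2}^{\dag}$ via the full-row-rank encoding map and applying rank--nullity is exactly right, and you correctly handle the only delicate point: the conjugate-linearity of $\langle\mathbf{c},\cdot\rangle_{H}$ in its second argument still sends $\mathbf{y}\mapsto\lambda\mathbf{y}$ to $\lambda^{q}\langle\mathbf{c},\mathbf{y}\rangle_{H}$, so the vanishing set remains an $\mathbb{F}_{q^{2}}$-subspace and it suffices to test against the rows of $G_{2}$.
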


\begin{corollary}
(\!\!\cite{Guenda2018})
Let $\mathcal{C}$ be an $[n,t,d]_{q^{2}}$ code with a generator matrix $G$. Then
$\mathrm{dim}(\mathrm{Hull}_{H}(\mathcal{C}))=t-\mathrm{rank}(GG^{\dag})$.
\end{corollary}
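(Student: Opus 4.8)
The plan is to obtain the corollary as the diagonal special case of the preceding Lemma. Taking $\mathcal{C}_{1}=\mathcal{C}_{2}=\mathcal{C}$ with common generator matrix $G_{1}=G_{2}=G$ and $t_{1}=t$, the intersection $\mathcal{C}_{1}\cap\mathcal{C}_{2}^{\bot_{H}}$ becomes $\mathcal{C}\cap\mathcal{C}^{\bot_{H}}$, which is by definition $\mathrm{Hull}_{H}(\mathcal{C})$, while $G_{1}G_{2}^{\dag}$ becomes $GG^{\dag}$. Substituting these into the Lemma immediately yields $\mathrm{dim}(\mathrm{Hull}_{H}(\mathcal{C}))=t-\mathrm{rank}(GG^{\dag})$, so the only thing to check is that the Lemma legitimately permits the two constituent codes to coincide, which it does since no genericity or distinctness hypothesis is imposed there.

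As an independent verification, I would also argue directly by rank--nullity. Writing an arbitrary codeword as $\mathbf{c}=\mathbf{x}G$ for $\mathbf{x}\in\mathbb{F}_{q^{2}}^{t}$, and any codeword as $\mathbf{c}'=\mathbf{y}G$, I would compute $\langle\mathbf{c},\mathbf{c}'\rangle_{H}=\mathbf{c}(\mathbf{c}')^{\dag}=\mathbf{x}GG^{\dag}\mathbf{y}^{\dag}$ using the identity $(\mathbf{y}G)^{\dag}=G^{\dag}\mathbf{y}^{\dag}$. Since $\mathbf{y}^{\dag}$ ranges over all of $\mathbb{F}_{q^{2}}^{t}$ as $\mathbf{y}$ does (conjugation being a bijection of $\mathbb{F}_{q^{2}}$), the codeword $\mathbf{c}$ lies in $\mathcal{C}^{\bot_{H}}$ precisely when $\mathbf{x}GG^{\dag}=\mathbf{0}$, i.e. when $\mathbf{x}$ lies in the left null space of $GG^{\dag}$. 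Because $G$ has full row rank $t$, the encoding map $\mathbf{x}\mapsto\mathbf{x}G$ is injective, so $\mathrm{dim}(\mathrm{Hull}_{H}(\mathcal{C}))$ equals the dimension of that left null space, namely $t-\mathrm{rank}(GG^{\dag})$.

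There is essentially no genuine obstacle here, since the statement is a one-line specialization of the Lemma; the only points requiring care are the conjugate-transpose bookkeeping in the identity $(\mathbf{y}G)^{\dag}=G^{\dag}\mathbf{y}^{\dag}$ and the observation that post-composing with a column-conjugation permutes $\mathbb{F}_{q^{2}}^{t}$ and therefore does not affect the dimension count, both of which are routine.
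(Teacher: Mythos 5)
Your proof is correct and matches the paper's (implicit) derivation: the paper states this as a cited corollary of Lemma 3.1, and specializing that lemma to $\mathcal{C}_{1}=\mathcal{C}_{2}=\mathcal{C}$ is exactly the intended one-line argument. Your additional direct rank--nullity verification is also sound (relying, as you note, on $G$ having full row rank $t$), but it is not needed beyond the specialization.
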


\begin{theorem} Let $\mathcal{C}_{i}$ be an $[n,t_{i},d_{i}]_{q^{2}}$ code with a generator matrix $G_{i}$ for $i=1,2,\ldots,k$.
Let $\mathcal{C}_{A,k}:=[\mathcal{C}_{1},\mathcal{C}_{2},\ldots,\mathcal{C}_{k}]\cdot A$ be a MP code with a generator matrix $G$,
where $A\in\mathcal{M}(\mathbb{F}_{q^{2}},k)$.
If $AA^{\dag}=DP_{\tau}$ is monomial, then
$\mathrm{dim}(\mathrm{Hull}_{H}(\mathcal{C}_{A,k}))=\sum_{i=1}^{k}\mathrm{dim}\big(\mathcal{C}_{i}\cap\mathcal{C}_{\tau(i)}^{\bot_{H}}\big)$.
\end{theorem}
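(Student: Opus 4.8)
The plan is to reduce the claim to a single rank computation by means of the Hermitian-hull dimension formula in the Corollary, and then to exploit the block structure that the hypothesis $AA^{\dag}=DP_{\tau}$ forces on $GG^{\dag}$. First I would record an explicit generator matrix for $\mathcal{C}_{A,k}$. Reading the codewords in column-major order as described above, one verifies that
\[
G=\begin{pmatrix} a_{1,1}G_{1} & \cdots & a_{1,k}G_{1}\\ \vdots & & \vdots\\ a_{k,1}G_{k} & \cdots & a_{k,k}G_{k}\end{pmatrix}=\mathrm{diag}(G_{1},\ldots,G_{k})\,(A\otimes I_{n}),
\]
where $\mathrm{diag}(G_{1},\ldots,G_{k})$ is block-diagonal and $\otimes$ denotes the Kronecker product. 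Since $AA^{\dag}=DP_{\tau}$ is invertible, $A$ is invertible, so $G$ has full row rank $\sum_{i=1}^{k}t_{i}$ and $\mathrm{dim}(\mathcal{C}_{A,k})=\sum_{i=1}^{k}t_{i}$. By the Corollary, $\mathrm{dim}(\mathrm{Hull}_{H}(\mathcal{C}_{A,k}))=\sum_{i=1}^{k}t_{i}-\mathrm{rank}(GG^{\dag})$; combined with the Lemma, which gives $\mathrm{dim}(\mathcal{C}_{i}\cap\mathcal{C}_{\tau(i)}^{\bot_{H}})=t_{i}-\mathrm{rank}(G_{i}G_{\tau(i)}^{\dag})$ for each $i$, the theorem reduces to proving the identity $\mathrm{rank}(GG^{\dag})=\sum_{i=1}^{k}\mathrm{rank}(G_{i}G_{\tau(i)}^{\dag})$.

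Next I would compute $GG^{\dag}$ blockwise. From $(A\otimes I_{n})(A\otimes I_{n})^{\dag}=(AA^{\dag})\otimes I_{n}=(DP_{\tau})\otimes I_{n}$ it follows that the $(i,j)$ block of $GG^{\dag}$ equals $(DP_{\tau})_{i,j}\,G_{i}G_{j}^{\dag}$. With the paper's convention for $P_{\tau}$, the scalar $(DP_{\tau})_{i,j}$ equals the $i$-th diagonal entry $\lambda_{i}$ of $D$ exactly when $i=\tau(j)$ and vanishes otherwise. Hence each block row $i$ of $GG^{\dag}$ contains a single nonzero block, located in block column $\tau^{-1}(i)$ and equal to $\lambda_{i}G_{i}G_{\tau^{-1}(i)}^{\dag}$; as $i$ runs through $1,\ldots,k$ these columns $\tau^{-1}(i)$ run through a permutation of $1,\ldots,k$, so $GG^{\dag}$ is block-monomial.

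Finally, because the nonzero blocks lie in pairwise distinct block rows and block columns, a block-column permutation (which preserves rank) carries $GG^{\dag}$ to block-diagonal form, giving $\mathrm{rank}(GG^{\dag})=\sum_{i=1}^{k}\mathrm{rank}(\lambda_{i}G_{i}G_{\tau^{-1}(i)}^{\dag})=\sum_{i=1}^{k}\mathrm{rank}(G_{i}G_{\tau^{-1}(i)}^{\dag})$, using $\lambda_{i}\neq 0$. Reindexing by $i=\tau(j)$ turns this into $\sum_{j=1}^{k}\mathrm{rank}(G_{\tau(j)}G_{j}^{\dag})$, and since $(G_{a}G_{b}^{\dag})^{\dag}=G_{b}G_{a}^{\dag}$ while rank is invariant under the conjugate transpose, $\mathrm{rank}(G_{\tau(j)}G_{j}^{\dag})=\mathrm{rank}(G_{j}G_{\tau(j)}^{\dag})$; summing recovers $\sum_{i=1}^{k}\mathrm{rank}(G_{i}G_{\tau(i)}^{\dag})$, which is the needed identity. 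I expect the main obstacle to be purely organizational: one has to pin down the $P_{\tau}$ convention so that the unique nonzero block of row $i$ really sits in block column $\tau^{-1}(i)$, and then invoke the rank symmetry $\mathrm{rank}(G_{i}G_{\tau(i)}^{\dag})=\mathrm{rank}(G_{\tau(i)}G_{i}^{\dag})$ to align the reindexed sum with the $\sum_{i}\mathrm{dim}(\mathcal{C}_{i}\cap\mathcal{C}_{\tau(i)}^{\bot_{H}})$ appearing in the statement.
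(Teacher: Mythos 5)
Your proof is correct and follows essentially the same route as the paper: factor the generator matrix blockwise, observe that $AA^{\dag}=DP_{\tau}$ forces $GG^{\dag}$ to be block-monomial with nonzero blocks $\lambda_i G_iG_{\sigma(i)}^{\dag}$, sum the block ranks, and apply the Lemma together with the Corollary. Your extra care with the $P_{\tau}$ convention (handling $\tau$ versus $\tau^{-1}$ via $\mathrm{rank}(G_aG_b^{\dag})=\mathrm{rank}(G_bG_a^{\dag})$) is a welcome tightening of a point the paper glosses over, but it does not change the argument.
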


\begin{proof}
Let $\mathbf{a}_{i}$ denote the $i$-th row of $A$ for $i=1,2,\ldots,k$. Then
$$G=\left(
\begin{array}{c}
\mathbf{a}_{1}\otimes G_{1}\\
\mathbf{a}_{2}\otimes G_{2}\\
\vdots\\
\mathbf{a}_{k}\otimes G_{k}\\
\end{array}\right),$$
which, together with $AA^{\dag}=DP_{\tau}$, derives that
\begin{align*}
GG^{\dag}
&=\left(
\begin{array}{c}
\mathbf{a}_{1}\otimes G_{1}\\
\mathbf{a}_{2}\otimes G_{2}\\
\vdots\\
\mathbf{a}_{k}\otimes G_{k}\\
\end{array}\right)(\mathbf{a}_{1}^{\dag}\otimes G_{1}^{\dag},\mathbf{a}_{2}^{\dag}\otimes G_{2}^{\dag},\ldots,\mathbf{a}_{k}^{\dag}\otimes G_{k}^{\dag})\\
&=\left(
\begin{array}{cccc}
\langle\mathbf{a}_{1},\mathbf{a}_{1}\rangle_{H}G_{1}G_{1}^{\dag}&\langle\mathbf{a}_{1},\mathbf{a}_{2}\rangle_{H}G_{1}G_{2}^{\dag}
&\cdots&\langle\mathbf{a}_{1},\mathbf{a}_{k}\rangle_{H}G_{1}G_{k}^{\dag}\\
\langle\mathbf{a}_{2},\mathbf{a}_{1}\rangle_{H}G_{2}G_{1}^{\dag}&\langle\mathbf{a}_{2},\mathbf{a}_{2}\rangle_{H}G_{2}G_{2}^{\dag}
&\cdots&\langle\mathbf{a}_{2},\mathbf{a}_{k}\rangle_{H}G_{2}G_{k}^{\dag}\\
\vdots&\vdots&\ddots&\vdots\\
\langle\mathbf{a}_{k},\mathbf{a}_{1}\rangle_{H}G_{k}G_{1}^{\dag}&\langle\mathbf{a}_{k},\mathbf{a}_{2}\rangle_{H}G_{k}G_{2}^{\dag}
&\cdots&\langle\mathbf{a}_{k},\mathbf{a}_{k}\rangle_{H}G_{k}G_{k}^{\dag}\\
\end{array}\right)\\
&=\left(
\begin{array}{cccc}
\cdots&\langle\mathbf{a}_{1},\mathbf{a}_{\tau(1)}\rangle_{H}G_{1}G_{\tau(1)}^{\dag}&\cdots&\cdots\\
\vdots&\vdots&\ddots&\vdots\\
\cdots&\cdots&\langle\mathbf{a}_{k},\mathbf{a}_{\tau(k)}\rangle_{H}G_{k}G_{\tau(k)}^{\dag}&\cdots\\
\end{array}\right).
\end{align*}
Thus, by Lemma 3.1, we obtain
\begin{align*}
\mathrm{rank}(GG^{\dag})
=\sum_{i=1}^{k}\mathrm{rank}(G_{i}G_{\tau(i)}^{\dag})
=\sum_{i=1}^{k}\left(t_{i}-\mathrm{dim}\big(\mathcal{C}_{i}\cap\mathcal{C}_{\tau(i)}^{\bot_{H}}\big)\right).
\end{align*}
Combing this with Corollary 3.2, we obtain
$$\mathrm{dim}(\mathrm{Hull}_{H}(\mathcal{C}_{A,k}))=\sum_{i=1}^{k}t_{i}-\mathrm{rank}(GG^{\dag})
=\sum_{i=1}^{k}\mathrm{dim}\big(\mathcal{C}_{i}\cap\mathcal{C}_{\tau(i)}^{\bot_{H}}\big),$$
which completes the proof. $\hfill\square$
\end{proof}

\vspace{6pt}

In the following theorem, we are ready to give the necessary and sufficient conditions that a MP code is HDC, AHDC, HSO, AHSO and Hermitian LCD, respectively.

\begin{theorem}
Let $AA^{\dag}=DP_{\tau}$ be monomial for $A\in\mathcal{M}(\mathbb{F}_{q^{2}},k)$.
Let $\mathcal{C}_{A,k}:=[\mathcal{C}_{1},\mathcal{C}_{2},\ldots,\mathcal{C}_{k}]\cdot A$, where
$\mathcal{C}_{\ell}$ is an $[n,t_{\ell},d_{\ell}]_{q^{2}}$ code for $\ell\in\{1,\ldots,k\}$.
Then the following statements hold.
\vspace{-4pt}
\begin{itemize}
\item [1)] $\mathcal{C}_{A,k}$ is HDC if and only if $\mathcal{C}_{\tau(i)}^{\bot_{H}}\subseteq\mathcal{C}_{i}$ for all $i\in\{1,\ldots,k\}$.

\vspace{-4pt}

\item [2)] $\mathcal{C}_{A,k}$ is AHDC if and only if
\vspace{-8pt}
\begin{align}
\sum_{i=1}^{k}\big(\mathrm{dim}(\mathcal{C}_{\tau(i)}^{\bot_{H}})-\mathrm{dim}(\mathcal{C}_{i}\cap\mathcal{C}_{\tau(i)}^{\bot_{H}})\big)=1
\end{align}
if and only if there exists a unique $j\in\{1,\ldots,k\}$ such that
$\mathrm{dim}(\mathcal{C}_{\tau(j)}^{\bot_{H}})-\mathrm{dim}(\mathcal{C}_{j}\cap\mathcal{C}_{\tau(j)}^{\bot_{H}})=1$
and $\mathcal{C}_{\tau(i)}^{\bot_{H}}\subseteq\mathcal{C}_{i}$ for all $i\in\{1,\ldots,k\}\backslash\{j\}$.

\vspace{-4pt}

\item [3)] $\mathcal{C}_{A,k}$ is HSO if and only if $\mathcal{C}_{i}\subseteq\mathcal{C}_{\tau(i)}^{\bot_{H}}$ for all $i\in\{1,\ldots,k\}$.

\vspace{-4pt}

\item [4)] $\mathcal{C}_{A,k}$ is AHSO if and only if
\vspace{-8pt}
\begin{align}
\sum_{i=1}^{k}\big(\mathrm{dim}(\mathcal{C}_{i})-\mathrm{dim}(\mathcal{C}_{i}\cap\mathcal{C}_{\tau(i)}^{\bot_{H}})\big)=1
\end{align}
if and only if there exists a unique $j\in\{1,\ldots,k\}$ such that
$\mathrm{dim}(\mathcal{C}_{j})-\mathrm{dim}(\mathcal{C}_{j}\cap\mathcal{C}_{\tau(j)}^{\bot_{H}})=1$
and $\mathcal{C}_{i}\subseteq\mathcal{C}_{\tau(i)}^{\bot_{H}}$ for all $i\in\{1,\ldots,k\}\backslash\{j\}$.

\vspace{-4pt}

\item [5)] $\mathcal{C}_{A,k}$ is Hermitian LCD if and only if
$\mathcal{C}_{i}\cap\mathcal{C}_{\tau(i)}^{\bot_{H}}=\{\mathbf{0}\}$ for all $i\in\{1,\ldots,k\}$.
\end{itemize}
\end{theorem}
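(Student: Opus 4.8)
The plan is to deduce all five statements uniformly from the hull-dimension formula of Theorem 3.3, combined with two elementary dimension counts, so that each global condition on $\mathcal{C}_{A,k}$ is localized to a condition on the constituent pairs $(\mathcal{C}_i,\mathcal{C}_{\tau(i)})$. First I would record the two identities that convert global data into sums over $i$. Since $AA^{\dag}=DP_{\tau}$ is monomial it is invertible, so $A$ is nonsingular; this forces $\mathcal{C}_{A,k}$ to have length $kn$ and full dimension $\mathrm{dim}(\mathcal{C}_{A,k})=\sum_{i=1}^{k}t_{i}=\sum_{i=1}^{k}\mathrm{dim}(\mathcal{C}_{i})$. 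Hence $\mathrm{dim}(\mathcal{C}_{A,k}^{\bot_{H}})=kn-\sum_{i=1}^{k}t_{i}=\sum_{i=1}^{k}(n-t_{i})=\sum_{i=1}^{k}\mathrm{dim}(\mathcal{C}_{i}^{\bot_{H}})$, and because $\tau$ is a bijection this equals $\sum_{i=1}^{k}\mathrm{dim}(\mathcal{C}_{\tau(i)}^{\bot_{H}})$. Together with $\mathrm{dim}(\mathrm{Hull}_{H}(\mathcal{C}_{A,k}))=\sum_{i=1}^{k}\mathrm{dim}(\mathcal{C}_{i}\cap\mathcal{C}_{\tau(i)}^{\bot_{H}})$ from Theorem 3.3, these let me replace $\mathrm{dim}(\mathcal{C}_{A,k})$, $\mathrm{dim}(\mathcal{C}_{A,k}^{\bot_{H}})$, and $\mathrm{dim}(\mathrm{Hull}_{H}(\mathcal{C}_{A,k}))$ by termwise sums.

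For the three exact conditions I would use the extremal characterizations via the hull: $\mathcal{C}_{A,k}$ is HDC iff $\mathrm{dim}(\mathrm{Hull}_{H}(\mathcal{C}_{A,k}))=\mathrm{dim}(\mathcal{C}_{A,k}^{\bot_{H}})$; it is HSO iff $\mathrm{dim}(\mathrm{Hull}_{H}(\mathcal{C}_{A,k}))=\mathrm{dim}(\mathcal{C}_{A,k})$; and it is Hermitian LCD iff $\mathrm{dim}(\mathrm{Hull}_{H}(\mathcal{C}_{A,k}))=0$. Substituting the termwise sums, each becomes the statement that a sum of nonnegative integers attains its minimal value $0$: for HDC the summands are $\delta_{i}:=\mathrm{dim}(\mathcal{C}_{\tau(i)}^{\bot_{H}})-\mathrm{dim}(\mathcal{C}_{i}\cap\mathcal{C}_{\tau(i)}^{\bot_{H}})\geq 0$, for HSO they are $\gamma_{i}:=\mathrm{dim}(\mathcal{C}_{i})-\mathrm{dim}(\mathcal{C}_{i}\cap\mathcal{C}_{\tau(i)}^{\bot_{H}})\geq 0$, and for LCD they are $\mathrm{dim}(\mathcal{C}_{i}\cap\mathcal{C}_{\tau(i)}^{\bot_{H}})\geq 0$. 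A sum of nonnegative integers is $0$ iff every summand is $0$, and $\delta_{i}=0\iff\mathcal{C}_{\tau(i)}^{\bot_{H}}\subseteq\mathcal{C}_{i}$, $\gamma_{i}=0\iff\mathcal{C}_{i}\subseteq\mathcal{C}_{\tau(i)}^{\bot_{H}}$, while the last vanishes iff $\mathcal{C}_{i}\cap\mathcal{C}_{\tau(i)}^{\bot_{H}}=\{\mathbf{0}\}$; these give statements 1), 3), and 5).

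For the two almost conditions I would start from the defining defect: $\mathcal{C}_{A,k}$ is AHDC iff $\mathrm{dim}(\mathcal{C}_{A,k}^{\bot_{H}})-\mathrm{dim}(\mathrm{Hull}_{H}(\mathcal{C}_{A,k}))=1$ (the hull has codimension one in the Hermitian dual), and AHSO iff $\mathrm{dim}(\mathcal{C}_{A,k})-\mathrm{dim}(\mathrm{Hull}_{H}(\mathcal{C}_{A,k}))=1$ (codimension one in the code itself). Substituting the termwise sums turns these into $\sum_{i=1}^{k}\delta_{i}=1$ and $\sum_{i=1}^{k}\gamma_{i}=1$, which are exactly $(3.1)$ and $(3.2)$; this is the first biconditional in each of 2) and 4). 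For the second biconditional I would reuse the nonnegative-integer argument in its sharper form: a sum of nonnegative integers equals $1$ precisely when one summand equals $1$ and all others equal $0$. The index carrying the value $1$ yields the unique $j$ with $\mathrm{dim}(\mathcal{C}_{\tau(j)}^{\bot_{H}})-\mathrm{dim}(\mathcal{C}_{j}\cap\mathcal{C}_{\tau(j)}^{\bot_{H}})=1$ (resp. $\mathrm{dim}(\mathcal{C}_{j})-\mathrm{dim}(\mathcal{C}_{j}\cap\mathcal{C}_{\tau(j)}^{\bot_{H}})=1$), while the vanishing summands give $\mathcal{C}_{\tau(i)}^{\bot_{H}}\subseteq\mathcal{C}_{i}$ (resp. $\mathcal{C}_{i}\subseteq\mathcal{C}_{\tau(i)}^{\bot_{H}}$) for all $i\neq j$.

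The only points requiring genuine care are establishing the dual-dimension identity, which hinges on $A$ being nonsingular so that $\mathcal{C}_{A,k}$ has the full expected dimension, and keeping the two biconditionals in 2) and 4) logically separate rather than collapsing them. Beyond these, the entire argument is a routine localization of a global invariant: Theorem 3.3 does the heavy lifting, and the remaining reasoning is the observation that a sum of nonnegative integers is controlled entirely by its individual terms.
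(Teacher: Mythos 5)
Your proposal is correct and follows essentially the same route as the paper: both localize the global conditions via the hull-dimension formula of Theorem 3.3 together with $\mathrm{dim}(\mathcal{C}_{A,k})=\sum_{i}t_{i}$ and $\mathrm{dim}(\mathcal{C}_{A,k}^{\bot_{H}})=\sum_{i}\mathrm{dim}(\mathcal{C}_{\tau(i)}^{\bot_{H}})$, and then exploit that a sum of nonnegative integers equals $0$ (resp.\ $1$) iff every term vanishes (resp.\ exactly one term equals $1$). Your explicit remarks on the nonsingularity of $A$ and the reindexing by the bijection $\tau$ merely make precise steps the paper leaves implicit.
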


\begin{proof}
1) We have
\vspace{-6pt}
\begin{align*}
\mathcal{C}_{A,k}^{\bot_{H}}\subseteq\mathcal{C}_{A,k}
&\Longleftrightarrow\sum_{i=1}^{k}\mathrm{dim}\big(\mathcal{C}_{i}\cap\mathcal{C}_{\tau(i)}^{\bot_{H}}\big)
=\mathrm{dim}(\mathcal{C}_{A,k}^{\bot_{H}})=\sum_{i=1}^{k}\mathrm{dim}(\mathcal{C}_{\tau(i)}^{\bot_{H}}).\\
&\Longleftrightarrow\mathcal{C}_{i}\cap\mathcal{C}_{\tau(i)}^{\bot_{H}}=\mathcal{C}_{\tau(i)}^{\bot_{H}}, \ 1\leq i\leq k.\\
&\Longleftrightarrow\mathcal{C}_{\tau(i)}^{\bot_{H}}\subseteq\mathcal{C}_{i}, \ 1\leq i\leq k.
\end{align*}
This proves statement 1).

2) We deduce that
\vspace{-8pt}
\begin{align*}
\mathcal{C}_{A,k}\ \text{is AHDC}
&\Longleftrightarrow\mathrm{dim}(\mathrm{Hull}_{H}(\mathcal{C}_{A,k}))
=\mathrm{dim}(\mathcal{C}_{A,k}^{\bot_{H}})-1.\\
&\Longleftrightarrow\sum_{i=1}^{k}\mathrm{dim}\big(\mathcal{C}_{i}\cap\mathcal{C}_{\tau(i)}^{\bot_{H}}\big)
=\mathrm{dim}(\mathcal{C}_{A,k}^{\bot_{H}})-1.\\
&\Longleftrightarrow\sum_{i=1}^{k}\big(\mathrm{dim}(\mathcal{C}_{\tau(i)}^{\bot_{H}})-\mathrm{dim}(\mathcal{C}_{i}\cap\mathcal{C}_{\tau(i)}^{\bot_{H}})\big)=1.\\
&\Longleftrightarrow\mathrm{dim}(\mathcal{C}_{\tau(j)}^{\bot_{H}})-\mathrm{dim}(\mathcal{C}_{j}\cap\mathcal{C}_{\tau(j)}^{\bot_{H}})=1 \ \text{for a unique}\ j\in\{1,\ldots,k\},
\end{align*}
\vspace{-40pt}
\begin{align*}
\ \ \ \ \ \ \ \ \ \ \ \ \ \ \ \ \ \ \ \ \ \ \ \ \ \ \ \ \ \ \ \ \ \ \ \ \ \ \ \text{and} \ \mathcal{C}_{\tau(i)}^{\bot_{H}}\subseteq\mathcal{C}_{i}\
\text{for all}\
i\in\{1,\ldots,k\}\backslash\{j\}.
\end{align*}
This completes statement 2).

3) We have
\vspace{-6pt}
\begin{align*}
\mathcal{C}_{A,k}\subseteq\mathcal{C}_{A,k}^{\bot_{H}}
&\Longleftrightarrow\sum_{i=1}^{k}\mathrm{dim}\big(\mathcal{C}_{i}\cap\mathcal{C}_{\tau(i)}^{\bot_{H}}\big)
=\mathrm{dim}(\mathcal{C}_{A,k})=\sum_{i=1}^{k}\mathrm{dim}(\mathcal{C}_{i}).\\
&\Longleftrightarrow\mathcal{C}_{i}\cap\mathcal{C}_{\tau(i)}^{\bot_{H}}=\mathcal{C}_{i}, \ 1\leq i\leq k.\\
&\Longleftrightarrow\mathcal{C}_{i}\subseteq\mathcal{C}_{\tau(i)}^{\bot_{H}}, \ 1\leq i\leq k.
\end{align*}
This proves statement 3).

4) We deduce that
\vspace{-8pt}
\begin{align*}
\mathcal{C}_{A,k}\ \text{is AHSO}
&\Longleftrightarrow\mathrm{dim}(\mathrm{Hull}_{H}(\mathcal{C}_{A,k}))
=\mathrm{dim}(\mathcal{C}_{A,k})-1.\\
&\Longleftrightarrow\sum_{i=1}^{k}\mathrm{dim}\big(\mathcal{C}_{i}\cap\mathcal{C}_{\tau(i)}^{\bot_{H}}\big)
=\mathrm{dim}(\mathcal{C}_{A,k})-1.\\
&\Longleftrightarrow\sum_{i=1}^{k}\big(\mathrm{dim}(\mathcal{C}_{i})-\mathrm{dim}(\mathcal{C}_{i}\cap\mathcal{C}_{\tau(i)}^{\bot_{H}})\big)=1.\\
&\Longleftrightarrow\mathrm{dim}(\mathcal{C}_{j})-\mathrm{dim}(\mathcal{C}_{j}\cap\mathcal{C}_{\tau(j)}^{\bot_{H}})=1 \ \text{for a unique}\ j\in\{1,\ldots,k\},
\end{align*}
\vspace{-40pt}
\begin{align*}
\ \ \ \ \ \ \ \ \ \ \ \ \ \ \ \ \ \ \ \ \ \ \ \ \ \ \ \ \ \ \ \ \ \ \ \ \ \ \ \text{and} \ \mathcal{C}_{i}\subseteq\mathcal{C}_{\tau(i)}^{\bot_{H}}\ \text{for all}\ i\in\{1,\ldots,k\}\backslash\{j\}.
\end{align*}
This completes statement 4).

5) Statement 5) follows directly. $\hfill\square$
\end{proof}

\vspace{6pt}

For the trivial case, i.e., $\tau=\mathbbm{1}_{k}$, it is easy to obtain the following corollary.

\begin{corollary}
With the notations in Theorem 3.4.
Let $AA^{\dag}=D$ be invertible diagonal for $A\in\mathcal{M}(\mathbb{F}_{q^{2}},k)$. Then the following statements hold.
\vspace{-4pt}
\begin{itemize}
\item [1)] $\mathcal{C}_{A,k}$ is HDC if and only if $\mathcal{C}_{i}$ is HDC for all $i\in\{1,\ldots,k\}$.

\vspace{-4pt}

\item [2)] $\mathcal{C}_{A,k}$ is AHDC if and only if
\vspace{-8pt}
\begin{align*}
\sum_{i=1}^{k}\big(\mathrm{dim}(\mathcal{C}_{i}^{\bot_{H}})-\mathrm{dim}(\mathrm{Hull}_{H}(\mathcal{C}_{i}))\big)=1
\end{align*}
if and only if there exists a unique $j\in\{1,\ldots,k\}$ such that $\mathcal{C}_{j}$ is AHDC and $\mathcal{C}_{i}$ is HDC for all $i\in\{1,\ldots,k\}\backslash\{j\}$.

\vspace{-4pt}

\item [3)] $\mathcal{C}_{A,k}$ is HSO if and only if $\mathcal{C}_{i}$ is HSO for all $i\in\{1,\ldots,k\}$.

\vspace{-4pt}

\item [4)] $\mathcal{C}_{A,k}$ is AHSO if and only if
\vspace{-8pt}
\begin{align*}
\sum_{i=1}^{k}\big(\mathrm{dim}(\mathcal{C}_{i})-\mathrm{dim}(\mathrm{Hull}_{H}(\mathcal{C}_{i}))\big)=1
\end{align*}
if and only if there exists a unique $j\in\{1,\ldots,k\}$ such that $\mathcal{C}_{j}$ is AHSO and $\mathcal{C}_{i}$ is HSO for all $i\in\{1,\ldots,k\}\backslash\{j\}$.

\vspace{-4pt}

\item [5)] $\mathcal{C}_{A,k}$ is Hermitian LCD if and only if
$\mathcal{C}_{i}$ is Hermitian LCD for all $i\in\{1,\ldots,k\}$.
\end{itemize}
\end{corollary}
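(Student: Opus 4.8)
The plan is to obtain this corollary as a direct specialization of Theorem 3.4 to the identity permutation $\tau=\mathbbm{1}_{k}$. First I would observe that $AA^{\dag}=D$ being invertible diagonal is precisely the case $P_{\tau}=I_{k}$, i.e., $\tau(i)=i$ for every $i\in\{1,\ldots,k\}$. Substituting $\tau(i)=i$ into each of the five statements of Theorem 3.4 then reduces every condition from a relation between $\mathcal{C}_{i}$ and $\mathcal{C}_{\tau(i)}$ to a relation involving $\mathcal{C}_{i}$ alone, and the remaining task is to translate these into the standard single-code properties.

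The key bookkeeping identity I would invoke throughout is $\mathcal{C}_{i}\cap\mathcal{C}_{i}^{\bot_{H}}=\mathrm{Hull}_{H}(\mathcal{C}_{i})$, recorded in Section 2. With this, statement 1) follows because $\mathcal{C}_{\tau(i)}^{\bot_{H}}\subseteq\mathcal{C}_{i}$ becomes $\mathcal{C}_{i}^{\bot_{H}}\subseteq\mathcal{C}_{i}$, which is exactly the defining condition for $\mathcal{C}_{i}$ to be HDC; statements 3) and 5) follow in the same fashion, since $\mathcal{C}_{i}\subseteq\mathcal{C}_{i}^{\bot_{H}}$ is the definition of HSO and $\mathcal{C}_{i}\cap\mathcal{C}_{i}^{\bot_{H}}=\{\mathbf{0}\}$ is the definition of Hermitian LCD.

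For statements 2) and 4) I would specialize the sum conditions (3.1) and (3.2). Setting $\tau(i)=i$ in (3.1) and replacing $\mathcal{C}_{i}\cap\mathcal{C}_{i}^{\bot_{H}}$ by $\mathrm{Hull}_{H}(\mathcal{C}_{i})$ turns the left-hand side into $\sum_{i=1}^{k}\big(\mathrm{dim}(\mathcal{C}_{i}^{\bot_{H}})-\mathrm{dim}(\mathrm{Hull}_{H}(\mathcal{C}_{i}))\big)$, which is the displayed AHDC criterion; an identical manipulation of (3.2) yields the AHSO criterion. For the ``unique $j$'' reformulation, the condition $\mathrm{dim}(\mathcal{C}_{j}^{\bot_{H}})-\mathrm{dim}(\mathrm{Hull}_{H}(\mathcal{C}_{j}))=1$ is exactly the statement that $\mathcal{C}_{j}$ is AHDC, while $\mathcal{C}_{i}^{\bot_{H}}\subseteq\mathcal{C}_{i}$ for $i\neq j$ says the remaining constituents are HDC; the analogous reading of the AHSO case identifies $\mathcal{C}_{j}$ as AHSO and each other $\mathcal{C}_{i}$ as HSO.

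There is no genuine obstacle here: the entire argument is a mechanical substitution $\tau(i)=i$ into Theorem 3.4 followed by matching the resulting expressions against the definitions of HDC, AHDC, HSO, AHSO and Hermitian LCD. The only point requiring a moment's care is confirming that the single-code notions of ``almost'' dual-containing and ``almost'' self-orthogonal are governed by the same hull-dimension deficit of $1$ used in the proof of Theorem 3.4, so that $\mathrm{dim}(\mathrm{Hull}_{H}(\mathcal{C}_{j}))=\mathrm{dim}(\mathcal{C}_{j}^{\bot_{H}})-1$ and $\mathrm{dim}(\mathrm{Hull}_{H}(\mathcal{C}_{j}))=\mathrm{dim}(\mathcal{C}_{j})-1$ may be read directly as ``$\mathcal{C}_{j}$ is AHDC'' and ``$\mathcal{C}_{j}$ is AHSO'', respectively.
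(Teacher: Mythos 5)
Your proposal is correct and is exactly the argument the paper intends: the corollary is stated as the trivial case $\tau=\mathbbm{1}_{k}$ of Theorem 3.4, and the paper offers no further proof beyond that specialization. Your substitution $\tau(i)=i$ together with the identification $\mathcal{C}_{i}\cap\mathcal{C}_{i}^{\bot_{H}}=\mathrm{Hull}_{H}(\mathcal{C}_{i})$ and the single-code definitions of HDC, AHDC, HSO, AHSO and Hermitian LCD is precisely the intended route.
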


\section{Further characterization of Theorem 3.4 for $\tau\neq\mathbbm{1}_{k}$}

\subsection{An explicit characterization of Theorem 3.4 for $\tau\neq\mathbbm{1}_{k}$}

\begin{lemma}
Let $B\in\mathcal{M}(\mathbb{F}_{q^{2}},k)$. We have the following statements.
\vspace{-4pt}
\begin{itemize}
\item [(1)] If $B=B^{\top}=DP_{\tau}$ is monomial for $D=\mathrm{diag}(\mu_{1},\mu_{2},\ldots,\mu_{k})$ and $\tau\in S_{k}$,
then $\tau^{2}=\mathbbm{1}_{k}$ and $\mu_{\tau(i)}=\mu_{i}$ for all $i\in\{1,\ldots,k\}$.

\vspace{-4pt}

\item [(2)] If $B=B^{\dag}=DP_{\tau}$ is monomial for $D=\mathrm{diag}(\mu_{1},\mu_{2},\ldots,\mu_{k})$ and $\tau\in S_{k}$,
then $\tau^{2}=\mathbbm{1}_{k}$ and $\mu_{\tau(i)}=\mu_{i}^{q}$ for all $i\in\{1,\ldots,k\}$.
\end{itemize}
\end{lemma}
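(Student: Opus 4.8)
The plan is to reduce both parts to a single explicit description of the entries of the monomial matrix $B=DP_{\tau}$, and then to extract the two conclusions by matching supports (which controls $\tau$) and afterwards matching the surviving scalars (which controls the $\mu_{i}$).

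First I would record the entries of $B$. The stated convention for $P_{\tau}$---that its $\tau(i)$-th row is the $i$-th row of $I_{k}$---says precisely that $(P_{\tau})_{r,c}=\delta_{r,\tau(c)}$, whence
\begin{equation*}
B_{r,c}=(DP_{\tau})_{r,c}=\mu_{r}\,\delta_{r,\tau(c)}.
\end{equation*}
Since $D$ is invertible, $B$ is an invertible monomial matrix with exactly one nonzero entry in each row and column; concretely, the unique nonzero entry in row $r$ sits in column $\tau^{-1}(r)$ and equals $\mu_{r}$. I expect this bookkeeping to be the only delicate point, because the paper's convention for $P_{\tau}$ is the transpose of the more usual one, so keeping the placement of $\tau$ against $\tau^{-1}$ straight (and tracking how $D$ sits relative to $P_{\tau}$) is exactly where an error would creep in.

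For part (1) I would compute $(B^{\top})_{r,c}=B_{c,r}=\mu_{c}\,\delta_{c,\tau(r)}$, so the unique nonzero entry of $B^{\top}$ in row $r$ lies in column $\tau(r)$ with value $\mu_{\tau(r)}$. Imposing $B=B^{\top}$ and comparing, in each row $r$, the position of the nonzero entry forces $\tau^{-1}(r)=\tau(r)$ for all $r$, i.e.\ $\tau^{2}=\mathbbm{1}_{k}$; comparing the values at that common position then gives $\mu_{r}=\mu_{\tau(r)}$, which is the assertion $\mu_{\tau(i)}=\mu_{i}$. Part (2) is identical with $B^{\top}$ replaced by $B^{\dag}$: here $(B^{\dag})_{r,c}=(B_{c,r})^{q}=\mu_{c}^{q}\,\delta_{c,\tau(r)}$, so the support is unchanged and again $\tau^{2}=\mathbbm{1}_{k}$, while matching values yields $\mu_{r}=\mu_{\tau(r)}^{q}$. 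Replacing $r$ by $\tau(i)$ and using $\tau^{2}=\mathbbm{1}_{k}$ rewrites this as the stated $\mu_{\tau(i)}=\mu_{i}^{q}$, completing the proof. All steps after the entry computation are routine finite bookkeeping; no deeper tool is needed.
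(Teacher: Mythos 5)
Your proof is correct and follows essentially the same route as the paper's: write out the entries of the monomial matrix $B=DP_{\tau}$, match the supports of $B$ and $B^{\top}$ (resp.\ $B^{\dag}$) to force $\tau^{2}=\mathbbm{1}_{k}$, and then match the surviving scalars to get the relation on the $\mu_{i}$. The only difference is notational (Kronecker deltas versus the paper's elementary matrices $E_{i,j}$), and you are in fact more careful than the paper's own proof about which of the two transposed conventions for $P_{\tau}$ is in force; this does not affect the conclusion since it is symmetric in $\tau$ and $\tau^{-1}$.
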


\begin{proof}
Let us prove statement (2).
Write $P_{\tau}=\sum_{i=1}^{k}E_{i,\tau_(i)}$ and
$D=\sum_{i=1}^{k}\mu_{i}E_{i,i}$. Then $B=\sum_{i=1}^{k}\mu_{i}E_{i,\tau(i)}$, which derives that
$B^{\dag}=\sum_{i=1}^{k}\mu_{i}^{q}E_{\tau(i),i}=\sum_{i=1}^{k}\mu_{\tau(i)}^{q}E_{\tau^{2}(i),\tau(i)}$.
As $B=B^{\dag}$, we obtain
\vspace{-8pt}
\begin{align*}
\sum_{i=1}^{k}\mu_{i}E_{i,\tau(i)}=\sum_{i=1}^{k}\mu_{\tau(i)}^{q}E_{\tau^{2}(i),\tau(i)}.
\end{align*}
Thus, $\mu_{\tau(i)}^{q}=\mu_{i}$, i.e., $\mu_{\tau(i)}=\mu_{i}^{q}$ for all $i\in\{1,\ldots,k\}$;
$\tau^{2}(i)=i$ for all $i\in\{1,\ldots,k\}$, i.e., $\tau^{2}=\mathbbm{1}_{k}$. Similarly, one can obtain statement (1).
$\hfill\square$
\end{proof}

\begin{corollary}
Let $A\in\mathcal{M}(\mathbb{F}_{q^{2}},k)$. We have the following statements.
\vspace{-4pt}
\begin{itemize}
\item [(1)] If $AA^{\top}=DP_{\tau}$ is monomial for $D=\mathrm{diag}(\mu_{1},\mu_{2},\ldots,\mu_{k})$ and $\tau\in S_{k}$,
then $\tau^{2}=\mathbbm{1}_{k}$ and $\mu_{\tau(i)}=\mu_{i}$ for all $i\in\{1,\ldots,k\}$.

\vspace{-4pt}

\item [(2)] If $AA^{\dag}=DP_{\tau}$ is monomial for $D=\mathrm{diag}(\mu_{1},\mu_{2},\ldots,\mu_{k})$ and $\tau\in S_{k}$,
then $\tau^{2}=\mathbbm{1}_{k}$ and $\mu_{\tau(i)}=\mu_{i}^{q}$ for all $i\in\{1,\ldots,k\}$.
\end{itemize}

\end{corollary}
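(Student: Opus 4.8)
The plan is to reduce Corollary 4.3 directly to Lemma 4.1 by exhibiting the appropriate symmetry of the product matrix $B$. The key observation is that $AA^{\top}$ is automatically symmetric and $AA^{\dag}$ is automatically Hermitian; once these products are assumed to be monomial, the hypotheses of Lemma 4.1 are met verbatim and the conclusions transfer immediately. This is precisely why the statement is phrased as a corollary rather than a theorem.

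For statement (1), I would set $B := AA^{\top}$ and note that $B^{\top} = (AA^{\top})^{\top} = (A^{\top})^{\top}A^{\top} = AA^{\top} = B$, using the identity $(XY)^{\top} = Y^{\top}X^{\top}$ together with $(A^{\top})^{\top} = A$. Thus $B = B^{\top}$, so $B$ is symmetric. Since by hypothesis $B = DP_{\tau}$ is monomial, both hypotheses of Lemma 4.1(1) hold, and I conclude $\tau^{2} = \mathbbm{1}_{k}$ and $\mu_{\tau(i)} = \mu_{i}$ for all $i \in \{1,\ldots,k\}$.

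For statement (2), I would set $B := AA^{\dag}$ and verify that $B$ is Hermitian, i.e. $B^{\dag} = B$. This uses $(XY)^{\dag} = Y^{\dag}X^{\dag}$ together with $(A^{\dag})^{\dag} = A$; the latter holds because applying the conjugate transpose twice raises each entry to the power $q^{2}$, and $x^{q^{2}} = x$ for every $x \in \mathbb{F}_{q^{2}}$. Hence $B^{\dag} = (AA^{\dag})^{\dag} = (A^{\dag})^{\dag}A^{\dag} = AA^{\dag} = B$. With $B = DP_{\tau}$ monomial by assumption, Lemma 4.1(2) then yields $\tau^{2} = \mathbbm{1}_{k}$ and $\mu_{\tau(i)} = \mu_{i}^{q}$ for all $i \in \{1,\ldots,k\}$.

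The only point requiring any care — and it is genuinely routine rather than an obstacle — is confirming the two elementary identities $(XY)^{\dag} = Y^{\dag}X^{\dag}$ and $(A^{\dag})^{\dag} = A$ over $\mathbb{F}_{q^{2}}$, since these are exactly what guarantee that $AA^{\top}$ and $AA^{\dag}$ fall within the scope of Lemma 4.1. No argument beyond Lemma 4.1 is needed, and I would keep the write-up short, emphasizing the symmetry/Hermiticity reductions and citing the lemma for the conclusions.
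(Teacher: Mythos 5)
Your proposal is correct and matches the paper's (implicit) argument exactly: the corollary is stated immediately after Lemma 4.1 with no separate proof, the intended deduction being precisely that $AA^{\top}$ is symmetric and $AA^{\dag}$ is Hermitian (via $(XY)^{\dag}=Y^{\dag}X^{\dag}$ and $x^{q^{2}}=x$ on $\mathbb{F}_{q^{2}}$), so Lemma 4.1 applies verbatim. Nothing further is needed.
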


\begin{remark}
For $\tau\in S_{k}$, it is easy to know $\tau^{2}=\mathbbm{1}_{k}$ if and only if one the following two cases hold:
\vspace{-3pt}
\begin{itemize}
\item (Trivial case) $\tau$ is the identity permutation, i.e., $\tau=\mathbbm{1}_{k}$.

\vspace{-3pt}
\item (Nontrivial case) $\tau$ is a permutation of order $2$, that is, $\tau$ can be expressed as the product of some disjoint 2-cycles.
\end{itemize}
\end{remark}

The following lemma calculates the number of the permutations $\tau$ of order 2 in $S_{k}$ and thus obtain the number of the permutations $\tau\in S_{k}$ satisfying $\tau^{2}=\mathbbm{1}_{k}$.

\begin{lemma}
(\!\!\cite{Cao2024On})
1) The number of the permutations $\tau$ of order 2 in $S_{k}$ is $\sum_{i=1}^{\lfloor\frac{k}{2}\rfloor}\frac{k!}{(k-2i)!\cdot i!\cdot2^{i}}$.

2) The number of the permutations $\tau\in S_{k}$ satisfying $\tau^{2}=\mathbbm{1}_{k}$ is $\sum_{i=0}^{\lfloor\frac{k}{2}\rfloor}\frac{k!}{(k-2i)!\cdot i!\cdot2^{i}}$.
\end{lemma}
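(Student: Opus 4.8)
The plan is to count involutions in $S_k$ directly by stratifying according to cycle type, and then to read off both assertions as partial sums of a single quantity. By Remark 4.3, the condition $\tau^2=\mathbbm{1}_k$ is equivalent to $\tau$ being an involution, i.e. a product of $i$ disjoint transpositions together with $k-2i$ fixed points, where $i$ ranges over $0\leq i\leq\lfloor k/2\rfloor$. The identity corresponds precisely to $i=0$, while the permutations of order exactly $2$ are those with $i\geq 1$. Hence both formulas will follow once I establish that the number $N_i$ of involutions in $S_k$ having exactly $i$ transpositions equals $\frac{k!}{(k-2i)!\cdot i!\cdot 2^i}$.

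To compute $N_i$, I would use a standard \emph{order-then-quotient} argument. First select an ordered list of $2i$ distinct elements of $\{1,\ldots,k\}$; there are $\frac{k!}{(k-2i)!}$ such lists. Reading the list as consecutive pairs $(a_1\,a_2)(a_3\,a_4)\cdots(a_{2i-1}\,a_{2i})$ produces an involution with $i$ transpositions, whose fixed points are exactly the $k-2i$ unselected elements. Two sources of overcounting must then be corrected: each transposition is unordered, contributing a factor $2^i$, and the $i$ transpositions may themselves be listed in any order, contributing a factor $i!$. Dividing yields $N_i=\frac{k!}{(k-2i)!\,i!\,2^i}$.

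With $N_i$ in hand, statement 1) is the sum $\sum_{i=1}^{\lfloor k/2\rfloor}N_i$ over the strata with at least one transposition, while statement 2) is $\sum_{i=0}^{\lfloor k/2\rfloor}N_i$ with the extra term $N_0=\frac{k!}{k!\cdot 0!\cdot 1}=1$ accounting for the identity permutation. I expect the only delicate point to be the bookkeeping in the derivation of $N_i$: one must verify that the map from ordered $2i$-lists to involutions is surjective onto the set of involutions with $i$ transpositions and is exactly $(2^i\cdot i!)$-to-one. Once that counting map is justified, the two claimed identities are immediate by summing over the appropriate ranges of $i$.
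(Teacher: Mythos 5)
Your proposal is correct. Note that the paper itself offers no proof of this lemma --- it is quoted from \cite{Cao2024On} --- so there is no internal argument to compare against; your stratification by the number $i$ of disjoint transpositions, with the order-then-quotient count $N_i=\frac{k!}{(k-2i)!\,i!\,2^i}$ (each involution arising from exactly $2^i\cdot i!$ ordered $2i$-lists), is the standard and fully valid derivation, and the two claimed formulas follow by summing over $i\geq 1$ and $i\geq 0$ respectively.
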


\vspace{7pt}

We give a more explicit characterization of Theorem 3.4 for $\tau\neq\mathbbm{1}_{k}$ in the following Theorem.

\begin{theorem}
Let $AA^{\dag}=DP_{\tau}$ be monomial for $A\in\mathcal{M}(\mathbb{F}_{q^{2}},k)$,
where $\tau=(i_{1}\ j_{1})(i_{2}\ j_{2})\cdots(i_{s}\ j_{s})$ denotes the product of $s$ disjoint 2-cycles for $s\in\{1,\ldots,\lfloor\frac{k}{2}\rfloor\}$.
Let $\mathcal{F}=\{1,\ldots,k\}-\{i_{1},\ldots,i_{s},j_{1},\ldots,j_{s}\}$.
Let $\mathcal{C}_{A,k}:=[\mathcal{C}_{1},\mathcal{C}_{2},\ldots,\mathcal{C}_{k}]\cdot A$,
where $\mathcal{C}_{\ell}$ is an $[n,t_{\ell},d_{\ell}]_{q^{2}}$ code for $\ell\in\{1,\ldots,k\}$.
Then the following statements hold.
\vspace{-4pt}
\begin{itemize}
\item [1)] $\mathcal{C}_{A,k}$ is HDC if and only if $\mathcal{C}_{j_{r}}^{\bot_{H}}\subseteq\mathcal{C}_{i_{r}}$ for all $r\in\{1,\ldots,s\}$
and $\mathcal{C}_{i}$ is HDC for all $i\in\mathcal{F}$.

\vspace{-4pt}

\item [2)] $\mathcal{C}_{A,k}$ is AHDC if and only if
\vspace{-6pt}
\begin{align*}
\sum_{r=1}^{s}\big(\mathrm{dim}(\mathcal{C}_{j_{r}}^{\bot_{H}})-\mathrm{dim}(\mathcal{C}_{i_{r}}\cap\mathcal{C}_{j_{r}}^{\bot_{H}})\big)=0,\
\sum_{i\in\mathcal{F}}\big(\mathrm{dim}(\mathcal{C}_{i}^{\bot_{H}})-\mathrm{dim}(\mathrm{Hull}_{H}(\mathcal{C}_{i}))\big)=1
\end{align*}
if and only if $\mathcal{C}_{j_{r}}^{\bot_{H}}\subseteq\mathcal{C}_{i_{r}}$ for all $r\in\{1,\ldots,s\}$ and there exists a unique $j\in\mathcal{F}$ such that
$\mathcal{C}_{j}$ is AHDC and $\mathcal{C}_{i}$ is HDC for all $i\in\mathcal{F}\backslash\{j\}$.

\vspace{-4pt}

\item [3)] $\mathcal{C}_{A,k}$ is HSO if and only if $\mathcal{C}_{i_{r}}\subseteq\mathcal{C}_{j_{r}}^{\bot_{H}}$ for all $r\in\{1,\ldots,s\}$
and $\mathcal{C}_{i}$ is HSO for all $i\in\mathcal{F}$.

\vspace{-4pt}

\item [4)] $\mathcal{C}_{A,k}$ is AHSO if and only if
\vspace{-6pt}
\begin{align*}
\sum_{r=1}^{s}\big(\mathrm{dim}(\mathcal{C}_{i_{r}})-\mathrm{dim}(\mathcal{C}_{i_{r}}\cap\mathcal{C}_{j_{r}}^{\bot_{H}})\big)=0,\
\sum_{i\in\mathcal{F}}\big(\mathrm{dim}(\mathcal{C}_{i})-\mathrm{dim}(\mathcal{C}_{i}\cap\mathcal{C}_{i}^{\bot_{H}})\big)=1
\end{align*}
if and only if $\mathcal{C}_{i_{r}}\subseteq\mathcal{C}_{j_{r}}^{\bot_{H}}$ for all $r\in\{1,\ldots,s\}$ and there exists a unique $j\in\mathcal{F}$ such that
$\mathcal{C}_{j}$ is AHSO and $\mathcal{C}_{i}$ is HSO for all $i\in\mathcal{F}\backslash\{j\}$.

\vspace{-4pt}

\item [5)] $\mathcal{C}_{A,k}$ is Hermitian LCD if and only if
$\mathcal{C}_{i_{r}}\cap\mathcal{C}_{j_{r}}^{\bot_{H}}=\{\mathbf{0}\}$ and $\mathcal{C}_{j_{r}}\cap\mathcal{C}_{i_{r}}^{\bot_{H}}=\{\mathbf{0}\}$
for all $r\in\{1,\ldots,s\}$ and $\mathcal{C}_{i}$ is Hermitian LCD for all $i\in\mathcal{F}$.
\end{itemize}
\end{theorem}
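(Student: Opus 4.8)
The plan is to invoke Theorem 3.4 and then reorganize each of its five conditions according to how $\tau$ partitions $\{1,\ldots,k\}$ into the fixed-point set $\mathcal{F}$ and the $s$ transposed pairs $\{i_{r},j_{r}\}$. On $\mathcal{F}$ one has $\tau(i)=i$, so every condition in Theorem 3.4 collapses to the corresponding intrinsic property (HDC, HSO, Hermitian LCD, and so on) of $\mathcal{C}_{i}$ itself, exactly as in Corollary 3.5; the real content is to understand the contribution of each pair.

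For statements 1) and 3) I would use a duality reduction. Since $(\mathcal{C}^{\bot_{H}})^{\bot_{H}}=\mathcal{C}$ and Hermitian duality reverses inclusions, the condition $\mathcal{C}_{j_{r}}^{\bot_{H}}\subseteq\mathcal{C}_{i_{r}}$ (the requirement at the index $i=i_{r}$) is equivalent to $\mathcal{C}_{i_{r}}^{\bot_{H}}\subseteq\mathcal{C}_{j_{r}}$ (the requirement at $i=j_{r}$); hence the two constraints that Theorem 3.4 imposes on a single pair coincide, and only one of them need be recorded. The same remark handles the HSO inclusions in 3). For statement 5), by contrast, the two conditions $\mathcal{C}_{i_{r}}\cap\mathcal{C}_{j_{r}}^{\bot_{H}}=\{\mathbf{0}\}$ and $\mathcal{C}_{j_{r}}\cap\mathcal{C}_{i_{r}}^{\bot_{H}}=\{\mathbf{0}\}$ are genuinely distinct: by Lemma 3.1 they read $t_{i_{r}}=\mathrm{rank}(G_{i_{r}}G_{j_{r}}^{\dag})$ and $t_{j_{r}}=\mathrm{rank}(G_{j_{r}}G_{i_{r}}^{\dag})$, which agree only when $t_{i_{r}}=t_{j_{r}}$, so both must be kept and statement 5) follows directly from Theorem 3.4(5).

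The substance lies in statements 2) and 4), where the key step is a parity argument. Setting $\delta_{i}:=\mathrm{dim}(\mathcal{C}_{\tau(i)}^{\bot_{H}})-\mathrm{dim}(\mathcal{C}_{i}\cap\mathcal{C}_{\tau(i)}^{\bot_{H}})\ge 0$, I would split the sum in Theorem 3.4(2) as $\sum_{i\in\mathcal{F}}\delta_{i}+\sum_{r=1}^{s}(\delta_{i_{r}}+\delta_{j_{r}})$. Using Lemma 3.1 together with $\mathrm{rank}(G_{i_{r}}G_{j_{r}}^{\dag})=\mathrm{rank}(G_{j_{r}}G_{i_{r}}^{\dag})$ (the two matrices are conjugate transposes of one another), one computes $\delta_{i_{r}}=n-t_{i_{r}}-t_{j_{r}}+\mathrm{rank}(G_{i_{r}}G_{j_{r}}^{\dag})=\delta_{j_{r}}$, so each pair contributes the even amount $2\delta_{i_{r}}$. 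The total therefore has the same parity as $\sum_{i\in\mathcal{F}}\delta_{i}$; as all summands are non-negative integers, the equation $\sum_{i}\delta_{i}=1$ forces $\delta_{i_{r}}=0$ for every $r$ (equivalently $\mathcal{C}_{j_{r}}^{\bot_{H}}\subseteq\mathcal{C}_{i_{r}}$) and $\sum_{i\in\mathcal{F}}\delta_{i}=1$. The latter is precisely the AHDC condition restricted to the fixed-point indices, which by the argument of Corollary 3.5(2) amounts to a unique $j\in\mathcal{F}$ with $\mathcal{C}_{j}$ AHDC and $\mathcal{C}_{i}$ HDC for the remaining $i\in\mathcal{F}$. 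Statement 4) is identical after replacing $\delta_{i}$ by $\epsilon_{i}:=\mathrm{dim}(\mathcal{C}_{i})-\mathrm{dim}(\mathcal{C}_{i}\cap\mathcal{C}_{\tau(i)}^{\bot_{H}})$, where the same rank symmetry gives $\epsilon_{i_{r}}=\mathrm{rank}(G_{i_{r}}G_{j_{r}}^{\dag})=\epsilon_{j_{r}}$ and $\epsilon_{i_{r}}=0$ is equivalent to $\mathcal{C}_{i_{r}}\subseteq\mathcal{C}_{j_{r}}^{\bot_{H}}$. The main obstacle is exactly this evenness observation: it explains why an almost dual-containing (or almost self-orthogonal) MP code with $\tau\ne\mathbbm{1}_{k}$ can acquire its one-dimensional defect only from a fixed point and never from a transposed pair, and getting the identity $\delta_{i_{r}}=\delta_{j_{r}}$ right is the crux of the whole theorem.
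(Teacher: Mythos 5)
Your proposal is correct and follows essentially the same route as the paper: split the sum from Theorem 3.4 over the fixed points and the $s$ transposed pairs, use Lemma 3.1 (equivalently the rank symmetry $\mathrm{rank}(G_{i_{r}}G_{j_{r}}^{\dag})=\mathrm{rank}(G_{j_{r}}G_{i_{r}}^{\dag})$) to show that the two terms of each pair are equal so that the pairs contribute a doubled, hence even, non-negative amount, and conclude that a total of $1$ forces the pair contributions to vanish and the defect to sit at a fixed point. Your identity $\delta_{i_{r}}=\delta_{j_{r}}$ is exactly the paper's relation $\mathrm{dim}(\mathcal{C}_{j_{r}}\cap\mathcal{C}_{i_{r}}^{\bot_{H}})=\mathrm{dim}(\mathcal{C}_{i_{r}}\cap\mathcal{C}_{j_{r}}^{\bot_{H}})+\mathrm{dim}(\mathcal{C}_{j_{r}})-\mathrm{dim}(\mathcal{C}_{i_{r}})$, and your handling of statements 1), 3), 5) matches what the paper leaves as ``follows directly.''
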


\begin{proof}
1) Statement 1) follows directly.

2) We have that $\mathcal{C}_{A,k}$ is AHDC if and only if
\vspace{-6pt}
\begin{align*}
\sum_{r=1}^{s}\big(\mathrm{dim}(\mathcal{C}_{j_{r}}^{\bot_{H}})-\mathrm{dim}(\mathcal{C}_{i_{r}}\cap\mathcal{C}_{j_{r}}^{\bot_{H}})\big)
+\sum_{r=1}^{s}\big(\mathrm{dim}(\mathcal{C}_{i_{r}}^{\bot_{H}})-\mathrm{dim}(\mathcal{C}_{j_{r}}\cap\mathcal{C}_{i_{r}}^{\bot_{H}})\big) \\
+\sum_{i\in\mathcal{F}}\big(\mathrm{dim}(\mathcal{C}_{i}^{\bot_{H}})-\mathrm{dim}(\mathcal{C}_{i}\cap\mathcal{C}_{i}^{\bot_{H}})\big)=1,
\end{align*}
which, together with the result
$\mathrm{dim}(\mathcal{C}_{j_{r}}\cap\mathcal{C}_{i_{r}}^{\bot_{H}})=\mathrm{dim}(\mathcal{C}_{i_{r}}\cap\mathcal{C}_{j_{r}}^{\bot_{H}})
+\mathrm{dim}(\mathcal{C}_{j_{r}})-\mathrm{dim}(\mathcal{C}_{i_{r}})$
derived from Lemma 3.1, implies that it is equivalent to
\vspace{-6pt}
\begin{align}
2\sum_{r=1}^{s}\big(\mathrm{dim}(\mathcal{C}_{j_{r}}^{\bot_{H}})-\mathrm{dim}(\mathcal{C}_{i_{r}}\cap\mathcal{C}_{j_{r}}^{\bot_{H}})\big)
+\sum_{i\in\mathcal{F}}\big(\mathrm{dim}(\mathcal{C}_{i}^{\bot_{H}})-\mathrm{dim}(\mathcal{C}_{i}\cap\mathcal{C}_{i}^{\bot_{H}})\big)=1.
\end{align}
This is also equivalent to
\vspace{-6pt}
\begin{align*}
\sum_{r=1}^{s}\big(\mathrm{dim}(\mathcal{C}_{j_{r}}^{\bot_{H}})-\mathrm{dim}(\mathcal{C}_{i_{r}}\cap\mathcal{C}_{j_{r}}^{\bot_{H}})\big)=0,\
\sum_{i\in\mathcal{F}}\big(\mathrm{dim}(\mathcal{C}_{i}^{\bot_{H}})-\mathrm{dim}(\mathcal{C}_{i}\cap\mathcal{C}_{i}^{\bot_{H}})\big)=1.
\end{align*}
Namely, $\mathcal{C}_{j_{r}}^{\bot_{H}}\subseteq\mathcal{C}_{i_{r}}$ for all $r\in\{1,\ldots,s\}$, and
there exists a unique $j\in\mathcal{F}$ such that
$\mathrm{dim}(\mathcal{C}_{j}^{\bot_{H}})-\mathrm{dim}(\mathcal{C}_{j}\cap\mathcal{C}_{j}^{\bot_{H}})=1$, i.e., $\mathcal{C}_{j}$ is AHDC,
and $\mathcal{C}_{i}^{\bot_{H}}\subseteq\mathcal{C}_{i}$, i.e., $\mathcal{C}_{i}$ is HDC for all $i\in\mathcal{F}\backslash\{j\}$.
This verifies statement 2).

3) Statement 3) follows directly.

4) We have that $\mathcal{C}_{A,k}$ is AHSO if and only if
\vspace{-6pt}
\begin{align*}
\sum_{r=1}^{s}\big(\mathrm{dim}(\mathcal{C}_{i_{r}})-\mathrm{dim}(\mathcal{C}_{i_{r}}\cap\mathcal{C}_{j_{r}}^{\bot_{H}})\big)
+\sum_{r=1}^{s}\big(\mathrm{dim}(\mathcal{C}_{j_{r}})-\mathrm{dim}(\mathcal{C}_{j_{r}}\cap\mathcal{C}_{i_{r}}^{\bot_{H}})\big) \\
+\sum_{i\in\mathcal{F}}\big(\mathrm{dim}(\mathcal{C}_{i})-\mathrm{dim}(\mathcal{C}_{i}\cap\mathcal{C}_{i}^{\bot_{H}})\big)=1,
\end{align*}
which, together with the result
$\mathrm{dim}(\mathcal{C}_{j_{r}}\cap\mathcal{C}_{i_{r}}^{\bot_{H}})=\mathrm{dim}(\mathcal{C}_{i_{r}}\cap\mathcal{C}_{j_{r}}^{\bot_{H}})
+\mathrm{dim}(\mathcal{C}_{j_{r}})-\mathrm{dim}(\mathcal{C}_{i_{r}})$
derived from Lemma 3.1, implies that it is equivalent to
\vspace{-6pt}
\begin{align}
2\sum_{r=1}^{s}\big(\mathrm{dim}(\mathcal{C}_{i_{r}})-\mathrm{dim}(\mathcal{C}_{i_{r}}\cap\mathcal{C}_{j_{r}}^{\bot_{H}})\big)
+\sum_{i\in\mathcal{F}}\big(\mathrm{dim}(\mathcal{C}_{i})-\mathrm{dim}(\mathcal{C}_{i}\cap\mathcal{C}_{i}^{\bot_{H}})\big)=1.
\end{align}
This is also equivalent to
\vspace{-6pt}
\begin{align*}
\sum_{r=1}^{s}\big(\mathrm{dim}(\mathcal{C}_{i_{r}})-\mathrm{dim}(\mathcal{C}_{i_{r}}\cap\mathcal{C}_{j_{r}}^{\bot_{H}})\big)=0,\
\sum_{i\in\mathcal{F}}\big(\mathrm{dim}(\mathcal{C}_{i})-\mathrm{dim}(\mathcal{C}_{i}\cap\mathcal{C}_{i}^{\bot_{H}})\big)=1.
\end{align*}
Namely, $\mathcal{C}_{i_{r}}\subseteq\mathcal{C}_{j_{r}}^{\bot_{H}}$ for all $r\in\{1,\ldots,s\}$, and
there exists a unique $j\in\mathcal{F}$ such that
$\mathrm{dim}(\mathcal{C}_{j})-\mathrm{dim}(\mathcal{C}_{j}\cap\mathcal{C}_{j}^{\bot_{H}})=1$, i.e., $\mathcal{C}_{j}$ is AHSO,
and $\mathcal{C}_{i}\subseteq\mathcal{C}_{i}^{\bot_{H}}$, i.e., $\mathcal{C}_{i}$ is HSO for all $i\in\mathcal{F}\backslash\{j\}$.
This verifies statement 4).

5) Statement 5) follows directly. $\hfill\square$
\end{proof}

\begin{table}[!htbp]
\renewcommand\arraystretch{1.4}
\centering	
\small
\setlength{\abovecaptionskip}{0.cm}
\setlength{\belowcaptionskip}{0.2cm}
\caption{All the possible manners to derive a HDC MP code $\mathcal{C}_{A,k}$ for $k=2,3,4$.}
\vspace{3pt}
\begin{tabular}{|c|c|c|c|}
\hline
$k$&Methods&Constituent codes $\mathcal{C}_{1},\ldots,\mathcal{C}_{k}$&Derived MP code $\mathcal{C}_{A,k}$\\
\hline
\hline
\multirow{3}*{2}&\makecell[c]{Class 1\\ \footnotesize{$\tau=(1)$}}
&$\mathcal{C}_{i}^{\bot_{H}}\subseteq \mathcal{C}_{i}$, $i=1,2$&\multirow{3}*{$\mathcal{C}_{A,2}^{\bot_{H}}\subseteq \mathcal{C}_{A,2}$}\\
\cline{2-3}
~&\makecell[c]{Class 2\\ \footnotesize{$\tau=(1\ 2)$}}&$\mathcal{C}_{1}^{\bot_{H}}\subseteq \mathcal{C}_{2}$&~\\
\hline
\hline
\multirow{6}*{3}&\makecell[c]{Class 1\\ \footnotesize{$\tau=(1)$}}
&$\mathcal{C}_{i}^{\bot_{H}}\subseteq \mathcal{C}_{i}$, $i=1,2,3$&\multirow{6}*{$\mathcal{C}_{A,3}^{\bot_{H}}\subseteq \mathcal{C}_{A,3}$}\\
\cline{2-3}
~&\makecell[c]{Class 2\\ \footnotesize{$\tau=(1\ 2)$}}&\makecell[c]{$\mathcal{C}_{1}^{\bot_{H}}\subseteq \mathcal{C}_{2}$,\\ $\mathcal{C}_{3}^{\bot_{H}}\subseteq \mathcal{C}_{3}$}&~\\
\cline{2-3}
~&\makecell[c]{Class 3\\ \footnotesize{$\tau=(1\ 3)$}}&\makecell[c]{$\mathcal{C}_{1}^{\bot_{H}}\subseteq \mathcal{C}_{3}$,\\ $\mathcal{C}_{2}^{\bot_{H}}\subseteq \mathcal{C}_{2}$}&~\\
\cline{2-3}
~&\makecell[c]{Class 4\\ \footnotesize{$\tau=(2\ 3)$}}&\makecell[c]{$\mathcal{C}_{2}^{\bot_{H}}\subseteq \mathcal{C}_{3}$,\\ $\mathcal{C}_{1}^{\bot_{H}}\subseteq \mathcal{C}_{1}$}&~\\
\hline
\hline
\multirow{15}*{4}&\makecell[c]{Class 1\\ \footnotesize{$\tau=(1)$}}
&$\mathcal{C}_{i}^{\bot_{H}}\subseteq \mathcal{C}_{i}$, $i=1,2,3,4$&\multirow{15}*{$\mathcal{C}_{A,4}^{\bot_{H}}\subseteq \mathcal{C}_{A,4}$}\\
\cline{2-3}
~&\makecell[c]{Class 2\\ \footnotesize{$\tau=(1\ 2)$}}&\makecell[c]{$\mathcal{C}_{1}^{\bot_{H}}\subseteq \mathcal{C}_{2}$,\\$\mathcal{C}_{i}^{\bot_{H}}\subseteq \mathcal{C}_{i}$, $i=3,4$}&~\\
\cline{2-3}
~&\makecell[c]{Class 3\\ \footnotesize{$\tau=(1\ 3)$}}&\makecell[c]{$\mathcal{C}_{1}^{\bot_{H}}\subseteq \mathcal{C}_{3}$,\\ $\mathcal{C}_{i}^{\bot_{H}}\subseteq \mathcal{C}_{i}$, $i=2,4$}&~\\
\cline{2-3}
~&\makecell[c]{Class 4\\ \footnotesize{$\tau=(1\ 4)$}}&\makecell[c]{$\mathcal{C}_{1}^{\bot_{H}}\subseteq \mathcal{C}_{4}$,\\ $\mathcal{C}_{i}^{\bot_{H}}\subseteq \mathcal{C}_{i}$, $i=2,3$}&~\\
\cline{2-3}
~&\makecell[c]{Class 5\\ \footnotesize{$\tau=(2\ 3)$}}&\makecell[c]{$\mathcal{C}_{2}^{\bot_{H}}\subseteq \mathcal{C}_{3}$,\\ $\mathcal{C}_{i}^{\bot_{H}}\subseteq \mathcal{C}_{i}$, $i=1,4$}&~\\
\cline{2-3}
~&\makecell[c]{Class 6\\ \footnotesize{$\tau=(2\ 4)$}}&\makecell[c]{$\mathcal{C}_{2}^{\bot_{H}}\subseteq \mathcal{C}_{4}$,\\ $\mathcal{C}_{i}^{\bot_{H}}\subseteq \mathcal{C}_{i}$, $i=1,3$}&~\\
\cline{2-3}
~&\makecell[c]{Class 7\\ \footnotesize{$\tau=(3\ 4)$}}&\makecell[c]{$\mathcal{C}_{3}^{\bot_{H}}\subseteq \mathcal{C}_{4}$,\\ $\mathcal{C}_{i}^{\bot_{H}}\subseteq \mathcal{C}_{i}$, $i=1,2$}&~\\
\cline{2-3}
~&\makecell[c]{Class 8\\ \footnotesize{$\tau=(1\ 2)(3\ 4)$}}&\makecell[c]{$\mathcal{C}_{1}^{\bot_{H}}\subseteq \mathcal{C}_{2}$,\\ $\mathcal{C}_{3}^{\bot_{H}}\subseteq \mathcal{C}_{4}$}&~\\
\cline{2-3}
~&\makecell[c]{Class 9\\ \footnotesize{$\tau=(1\ 3)(2\ 4)$}}&\makecell[c]{$\mathcal{C}_{1}^{\bot_{H}}\subseteq \mathcal{C}_{3}$,\\ $\mathcal{C}_{2}^{\bot_{H}}\subseteq \mathcal{C}_{4}$}&~\\
\cline{2-3}
~&\makecell[c]{Class 10\\ \footnotesize{$\tau=(1\ 4)(2\ 3)$}}&\makecell[c]{$\mathcal{C}_{1}^{\bot_{H}}\subseteq \mathcal{C}_{4}$,\\ $\mathcal{C}_{2}^{\bot_{H}}\subseteq \mathcal{C}_{3}$}&~\\
\hline
\end{tabular}
\end{table}

\subsection{Another characterization of AHDC and AHSO MP codes}

\begin{theorem}
With the notations in Theorem 4.4. Then the following statements hold.

i) $\mathcal{C}_{A,k}$ is AHDC if and only if
\vspace{-7pt}
\begin{align*}
2\sum_{r=1}^{s}\big(\mathrm{dim}(\mathcal{C}_{j_{r}})+\mathrm{dim}(\mathcal{C}_{i_{r}}\cap\mathcal{C}_{j_{r}}^{\bot_{H}})\big)
+\sum_{i\in\mathcal{F}}\big(\mathrm{dim}(\mathcal{C}_{i})+\mathrm{dim}(\mathcal{C}_{i}\cap\mathcal{C}_{i}^{\bot_{H}})\big)
=kn-1\Big.
\end{align*}

ii) $\mathcal{C}_{A,k}$ is AHSO if and only if
\vspace{-7pt}
\begin{align*}
2\sum_{r=1}^{s}\big(\mathrm{dim}(\mathcal{C}_{i_{r}}^{\bot_{H}})+\mathrm{dim}(\mathcal{C}_{i_{r}}\cap\mathcal{C}_{j_{r}}^{\bot_{H}})\big)
+\sum_{i\in\mathcal{F}}\big(\mathrm{dim}(\mathcal{C}_{i}^{\bot_{H}})+\mathrm{dim}(\mathcal{C}_{i}\cap\mathcal{C}_{i}^{\bot_{H}})\big)
=kn-1\Big.
\end{align*}
\end{theorem}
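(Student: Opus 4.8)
The plan is to reuse the two intermediate identities already established inside the proof of Theorem 4.4 and to convert them into the stated forms by an elementary algebraic rewriting. Recall that equation (4.1) in that proof shows $\mathcal{C}_{A,k}$ to be AHDC if and only if
\begin{align*}
2\sum_{r=1}^{s}\big(\mathrm{dim}(\mathcal{C}_{j_{r}}^{\bot_{H}})-\mathrm{dim}(\mathcal{C}_{i_{r}}\cap\mathcal{C}_{j_{r}}^{\bot_{H}})\big)
+\sum_{i\in\mathcal{F}}\big(\mathrm{dim}(\mathcal{C}_{i}^{\bot_{H}})-\mathrm{dim}(\mathcal{C}_{i}\cap\mathcal{C}_{i}^{\bot_{H}})\big)=1,
\end{align*}
while equation (4.2) gives the exactly analogous equivalence for the AHSO property. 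Thus it suffices to transform each of these two equalities into the corresponding identity in part i) or ii).

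For part i), I would first apply the elementary dimension identity $\mathrm{dim}(\mathcal{C}^{\bot_{H}})=n-\mathrm{dim}(\mathcal{C})$, valid for any $[n,t,d]_{q^{2}}$ code, to the terms $\mathrm{dim}(\mathcal{C}_{j_{r}}^{\bot_{H}})$ and $\mathrm{dim}(\mathcal{C}_{i}^{\bot_{H}})$ appearing in (4.1). This replaces each Hermitian-dual dimension by $n$ minus the dimension of the code itself. The key counting step is then to collect the constant contributions: since $\tau=(i_{1}\ j_{1})\cdots(i_{s}\ j_{s})$ is a product of $s$ disjoint 2-cycles with fixed-point set $\mathcal{F}$, the index set $\{1,\ldots,k\}$ splits as $\{i_{1},j_{1},\ldots,i_{s},j_{s}\}\sqcup\mathcal{F}$, so that $2s+|\mathcal{F}|=k$. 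Consequently the terms carrying a factor $n$ sum to $2sn+|\mathcal{F}|n=kn$. Transposing the remaining dimension terms to the right-hand side and absorbing the $+1$ then yields precisely the identity asserted in part i).

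Part ii) follows by the same procedure starting from (4.2): I would substitute $\mathrm{dim}(\mathcal{C}_{i_{r}})=n-\mathrm{dim}(\mathcal{C}_{i_{r}}^{\bot_{H}})$ and $\mathrm{dim}(\mathcal{C}_{i})=n-\mathrm{dim}(\mathcal{C}_{i}^{\bot_{H}})$, again use $2s+|\mathcal{F}|=k$ to collapse the $n$-terms into $kn$, and rearrange. No genuine obstacle arises in this argument; it is pure bookkeeping, and the only point demanding care is tracking which summand refers to $\mathcal{C}_{i_{r}}$, which to $\mathcal{C}_{j_{r}}$, and which to a Hermitian dual, so that each of the $k$ indices contributes exactly one factor of $n$ and the constant term collapses cleanly to $kn$. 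Alternatively one could rederive both identities from scratch via Theorem 3.3 by expanding $\mathrm{dim}(\mathrm{Hull}_{H}(\mathcal{C}_{A,k}))$ directly, but reusing (4.1) and (4.2) is the most economical route.
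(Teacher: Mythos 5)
Your proposal is correct and follows essentially the same route as the paper: the paper's own (very terse) proof of this theorem likewise just substitutes $\mathrm{dim}(\mathcal{C}^{\bot_{H}})=n-\mathrm{dim}(\mathcal{C})$ into the intermediate equivalences (4.1) and (4.2) established in the proof of Theorem 4.4 and rearranges. Your explicit bookkeeping of the constant term via $2s+|\mathcal{F}|=k$ (so the $n$-terms collapse to $kn$) is exactly the detail the paper leaves implicit.
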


\begin{proof}
By $\mathrm{dim}(\mathcal{C}_{j_{r}}^{\bot_{H}})=n-\mathrm{dim}(\mathcal{C}_{j_{r}})$ and
$\mathrm{dim}(\mathcal{C}_{i}^{\bot_{H}})=n-\mathrm{dim}(\mathcal{C}_{i})$, statement i) then follows.
Similarly, by $\mathrm{dim}(\mathcal{C}_{i_{r}})=n-\mathrm{dim}(\mathcal{C}_{i_{r}}^{\bot_{H}})$ and
$\mathrm{dim}(\mathcal{C}_{i})=n-\mathrm{dim}(\mathcal{C}_{i}^{\bot_{H}})$, statement ii) then follows. $\hfill\square$
\end{proof}

\begin{corollary}
With the notations in Theorem 4.5. Then the following statements hold.

1) There are no AHDC MP codes $\mathcal{C}_{A,k}$ if one of the following cases holds:
\vspace{-3pt}
\begin{itemize}
\item $k$ and $n$ odd,
$\sum_{i\in\mathcal{F}}\big(\mathrm{dim}(\mathcal{C}_{i})+\mathrm{dim}(\mathcal{C}_{i}\cap\mathcal{C}_{i}^{\bot_{H}})\big)$ odd;

\item $k$ or $n$ even, $\sum_{i\in\mathcal{F}}\big(\mathrm{dim}(\mathcal{C}_{i})+\mathrm{dim}(\mathcal{C}_{i}\cap\mathcal{C}_{i}^{\bot_{H}})\big)$ even.
\end{itemize}

2) There are no AHSO MP codes $\mathcal{C}_{A,k}$ if one of the following cases holds:
\vspace{-3pt}
\begin{itemize}
\item $k$ and $n$ odd, $\sum_{i\in\mathcal{F}}\big(\mathrm{dim}(\mathcal{C}_{i}^{\bot_{H}})+\mathrm{dim}(\mathcal{C}_{i}\cap\mathcal{C}_{i}^{\bot_{H}})\big)$ odd;

\item $k$ or $n$ even, $\sum_{i\in\mathcal{F}}\big(\mathrm{dim}(\mathcal{C}_{i}^{\bot_{H}})+\mathrm{dim}(\mathcal{C}_{i}\cap\mathcal{C}_{i}^{\bot_{H}})\big)$ even.
\end{itemize}
\end{corollary}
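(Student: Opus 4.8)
The plan is to reduce the entire statement to a parity (mod $2$) comparison of the two characterizing equations supplied by Theorem 4.5. Since both parts of the corollary assert \emph{non-existence}, it suffices to exhibit, under each listed hypothesis, a parity obstruction that prevents the relevant equation from Theorem 4.5 from holding. The guiding observation throughout is that every term carrying a factor of $2$ is automatically even, so modulo $2$ the left-hand side collapses to the $\mathcal{F}$-sum alone.

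First I would treat statement 1). By Theorem 4.5 i), $\mathcal{C}_{A,k}$ is AHDC if and only if
\begin{align*}
2\sum_{r=1}^{s}\big(\mathrm{dim}(\mathcal{C}_{j_{r}})+\mathrm{dim}(\mathcal{C}_{i_{r}}\cap\mathcal{C}_{j_{r}}^{\bot_{H}})\big)
+\sum_{i\in\mathcal{F}}\big(\mathrm{dim}(\mathcal{C}_{i})+\mathrm{dim}(\mathcal{C}_{i}\cap\mathcal{C}_{i}^{\bot_{H}})\big)=kn-1.
\end{align*}
The first summand is an even integer, so reducing modulo $2$ the left-hand side is congruent to $\sum_{i\in\mathcal{F}}\big(\mathrm{dim}(\mathcal{C}_{i})+\mathrm{dim}(\mathcal{C}_{i}\cap\mathcal{C}_{i}^{\bot_{H}})\big)$. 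On the right-hand side, $kn-1$ is even precisely when $kn$ is odd, i.e.\ when both $k$ and $n$ are odd, and is odd otherwise, i.e.\ when $k$ or $n$ is even. Matching these two parities against the two listed cases gives a contradiction in each: if $k,n$ are both odd while the $\mathcal{F}$-sum is odd, the left side is odd and the right side even; if $k$ or $n$ is even while the $\mathcal{F}$-sum is even, the left side is even and the right side odd. In either case the equation fails, so no AHDC code exists.

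Then I would dispatch statement 2) in exactly the same fashion, invoking Theorem 4.5 ii) in place of i). There the $\mathcal{F}$-sum is $\sum_{i\in\mathcal{F}}\big(\mathrm{dim}(\mathcal{C}_{i}^{\bot_{H}})+\mathrm{dim}(\mathcal{C}_{i}\cap\mathcal{C}_{i}^{\bot_{H}})\big)$, and the doubled sum over $r$ is again even, so the identical parity dichotomy for $kn-1$ applies verbatim. There is essentially no hard step: once Theorem 4.5 is in hand the whole argument is a single parity check, and the only point requiring care is the bookkeeping that $kn-1$ is even iff both $k,n$ are odd—after which each of the four listed combinations manifestly forces a mismatch between the two sides. $\hfill\square$
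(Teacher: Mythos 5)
Your proposal is correct and matches the paper's (implicit) argument: the corollary is stated as a direct consequence of Theorem~4.5, and the intended justification is exactly the reduction modulo~$2$ you carry out, where the doubled sum vanishes and the parity of $kn-1$ (even iff both $k$ and $n$ are odd) is compared against the parity of the $\mathcal{F}$-sum. Nothing further is needed.
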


\vspace{6pt}

By Corollary 4.6, we can directly verify the following result.

\begin{corollary}
With the notations in Theorem 4.4. Then the following statements hold.

1) There are no AHDC MP codes $\mathcal{C}_{A,k}$ if one of the following cases holds.

1.1) $\mathcal{F}=\varnothing$.

1.2) $\mathcal{F}\neq\varnothing$, $\mathcal{C}_{i}\subseteq \mathcal{C}_{i}^{\bot_{H}}$ for $i\in \mathcal{E}\subseteq\mathcal{F}$, and one of the following cases holds:
\vspace{-3pt}
\begin{itemize}
\item $k$ and $n$ odd,
$\sum_{i\in\mathcal{F}\backslash\mathcal{E}}\big(\mathrm{dim}(\mathcal{C}_{i})+\mathrm{dim}(\mathcal{C}_{i}\cap\mathcal{C}_{i}^{\bot_{H}})\big)$ odd;

\vspace{-3pt}
\item $k$ or $n$ even, $\sum_{i\in\mathcal{F}\backslash\mathcal{E}}\big(\mathrm{dim}(\mathcal{C}_{i})+\mathrm{dim}(\mathcal{C}_{i}\cap\mathcal{C}_{i}^{\bot_{H}})\big)$ even.
\end{itemize}
\vspace{-4pt}
In particular, for case 1.2), if $\mathcal{E}=\mathcal{F}\neq\varnothing$, and $k$ or $n$ even, then there are no AHDC MP codes $\mathcal{C}_{A,k}$.

2) There are no AHSO MP codes $\mathcal{C}_{A,k}$ if one of the following cases holds.

2.1) $\mathcal{F}=\varnothing$.

2.2) $\mathcal{F}\neq\varnothing$, $\mathcal{C}_{i}^{\bot_{H}}\subseteq \mathcal{C}_{i}$ for $i\in \mathcal{E}\subseteq\mathcal{F}$, and one of the following cases holds:
\vspace{-3pt}
\begin{itemize}
\item $k$ and $n$ odd, $\sum_{i\in\mathcal{F}\backslash\mathcal{E}}\big(\mathrm{dim}(\mathcal{C}_{i}^{\bot_{H}})
+\mathrm{dim}(\mathcal{C}_{i}\cap\mathcal{C}_{i}^{\bot_{H}})\big)$ odd;

\vspace{-3pt}
\item $k$ or $n$ even, $\sum_{i\in\mathcal{F}\backslash\mathcal{E}}\big(\mathrm{dim}(\mathcal{C}_{i}^{\bot_{H}})
+\mathrm{dim}(\mathcal{C}_{i}\cap\mathcal{C}_{i}^{\bot_{H}})\big)$ even.
\end{itemize}
\vspace{-4pt}
In particular, for case 2.2), if $\mathcal{E}=\mathcal{F}\neq\varnothing$, and $k$ or $n$ even, then there are no AHDC MP codes $\mathcal{C}_{A,k}$.
\end{corollary}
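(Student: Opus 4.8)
The plan is to derive everything from Corollary 4.6, whose two cases hinge only on a parity comparison: by Theorem 4.5 i) the quantity $2\sum_{r=1}^{s}\big(\mathrm{dim}(\mathcal{C}_{j_{r}})+\mathrm{dim}(\mathcal{C}_{i_{r}}\cap\mathcal{C}_{j_{r}}^{\bot_{H}})\big)$ is even, so $\mathcal{C}_{A,k}$ can be AHDC only if $\sum_{i\in\mathcal{F}}\big(\mathrm{dim}(\mathcal{C}_{i})+\mathrm{dim}(\mathcal{C}_{i}\cap\mathcal{C}_{i}^{\bot_{H}})\big)$ has the same parity as $kn-1$. The whole argument is thus a parity reduction: I replace the sum over $\mathcal{F}$ by the sum over $\mathcal{F}\backslash\mathcal{E}$ once I have shown the $\mathcal{E}$-terms are even, and I treat $\mathcal{F}=\varnothing$ as a degenerate case.

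First I would handle case 1.1). Since $\mathcal{F}=\{1,\ldots,k\}\backslash\{i_{1},\ldots,i_{s},j_{1},\ldots,j_{s}\}$ with the $2s$ indices distinct, one has $|\mathcal{F}|=k-2s$; hence $\mathcal{F}=\varnothing$ forces $k=2s$, so $k$ is even. The empty sum $\sum_{i\in\mathcal{F}}(\cdots)=0$ is even, so the hypothesis ``$k$ or $n$ even and the sum even'' of the second bullet of Corollary 4.6 1) is satisfied, yielding the nonexistence of AHDC MP codes. (Equivalently, the left-hand side of Theorem 4.5 i) is then even while $kn-1$ is odd, a contradiction.)

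Next, for case 1.2), the key observation is that for each $i\in\mathcal{E}$ the inclusion $\mathcal{C}_{i}\subseteq\mathcal{C}_{i}^{\bot_{H}}$ gives $\mathcal{C}_{i}\cap\mathcal{C}_{i}^{\bot_{H}}=\mathcal{C}_{i}$, whence $\mathrm{dim}(\mathcal{C}_{i})+\mathrm{dim}(\mathcal{C}_{i}\cap\mathcal{C}_{i}^{\bot_{H}})=2\mathrm{dim}(\mathcal{C}_{i})$ is even. Therefore $\sum_{i\in\mathcal{F}}(\cdots)\equiv\sum_{i\in\mathcal{F}\backslash\mathcal{E}}(\cdots)\pmod{2}$, and the two parity alternatives of Corollary 4.6 1) become verbatim the two bullets in 1.2) with $\mathcal{F}$ replaced by $\mathcal{F}\backslash\mathcal{E}$. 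The ``in particular'' clause is the special case $\mathcal{E}=\mathcal{F}$: then $\mathcal{F}\backslash\mathcal{E}=\varnothing$, the relevant sum is $0$ (even), so ``$k$ or $n$ even'' falls into the second bullet, giving nonexistence.

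Finally, statement 2) for AHSO runs in complete parallel, invoking Theorem 4.5 ii) and Corollary 4.6 2) instead. The only substantive change is the source of the even parity: for $i\in\mathcal{E}$ the reverse inclusion $\mathcal{C}_{i}^{\bot_{H}}\subseteq\mathcal{C}_{i}$ gives $\mathcal{C}_{i}\cap\mathcal{C}_{i}^{\bot_{H}}=\mathcal{C}_{i}^{\bot_{H}}$, so $\mathrm{dim}(\mathcal{C}_{i}^{\bot_{H}})+\mathrm{dim}(\mathcal{C}_{i}\cap\mathcal{C}_{i}^{\bot_{H}})=2\mathrm{dim}(\mathcal{C}_{i}^{\bot_{H}})$ is even, and the same reduction and degenerate case apply. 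I expect no real obstacle here; the only points demanding care are the bookkeeping that $\mathcal{F}=\varnothing$ forces $k$ even and that the inclusion hypotheses on $\mathcal{E}$ make exactly those summands even, so that parity is unchanged when passing from $\mathcal{F}$ to $\mathcal{F}\backslash\mathcal{E}$.
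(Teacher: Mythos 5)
Your proposal is correct and follows exactly the route the paper intends: the paper proves this corollary simply by asserting that it ``directly'' follows from Corollary 4.6 (equivalently, the parity constraint in Theorem 4.5), and your argument supplies precisely the missing details --- that $\mathcal{F}=\varnothing$ forces $k=2s$ even, and that the inclusion hypotheses on $\mathcal{E}$ make each corresponding summand equal to twice a dimension, so the parity of the sum over $\mathcal{F}$ coincides with that over $\mathcal{F}\backslash\mathcal{E}$. No gaps; this is a faithful, slightly more explicit version of the paper's own verification.
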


\section{HDC MP codes}

For the trivial case, i.e., $AA^{\dag}$ is invertible diagonal, we obtain the following method for constructing HDC and AHDC MP codes.

\begin{theorem}
(Trivial case)
Let $AA^{\dag}$ be invertible diagonal for $A\in\mathcal{M}_{k}(\mathbb{F}_{q^{2}})$.
Let $\mathcal{C}_{A,k}=[\mathcal{C}_{1},\mathcal{C}_{2},\ldots,\mathcal{C}_{k}]\cdot A$,
where $\mathcal{C}_{i}$ is an $[n,t_{i},d_{i}]_{q^{2}}$ code for $i=1,\ldots,k$.
Then the following statements hold.
\begin{itemize}
\item [1)] If $\mathcal{C}_{i}$ is HDC for all $i\in\{1,\ldots,k\}$, then $\mathcal{C}_{A,k}$ is a
$\big[kn,\sum_{i=1}^{k}t_{i},\geq\min\limits_{\scriptscriptstyle 1\leq i\leq k}\{D_{i}(A)d_{i}\}\big]_{q^{2}}$ HDC code.
Furthermore, if $A$ is also NSC, then $\mathcal{C}_{A,k}$ is a
$\big[kn,\sum_{i=1}^{k}t_{i},\geq\min\limits_{\scriptscriptstyle 1\leq i\leq k}\{(k-i+1)d_{i}\}\big]_{q^{2}}$ HDC code.

\vspace{-4pt}

\item [2)] If there is a unique $j\in\{1,\ldots,k\}$ such that $\mathcal{C}_{j}$ is AHDC and $\mathcal{C}_{i}$ is HDC for all $i\in\{1,\ldots,k\}\backslash\{j\}$,
then $\mathcal{C}_{A,k}$ is a $[kn,\sum_{i=1}^{k}t_{i},\geq\min\limits_{\scriptscriptstyle 1\leq i\leq k}\{D_{i}(A)d_{i}\}\big]_{q^{2}}$ AHDC code.
Furthermore, if $A$ is also NSC, then $\mathcal{C}_{A,k}$ is a
$\big[kn,\sum_{i=1}^{k}t_{i},\geq\min\limits_{\scriptscriptstyle 1\leq i\leq k}\{(k-i+1)d_{i}\}\big]_{q^{2}}$ AHDC code.
\end{itemize}
\end{theorem}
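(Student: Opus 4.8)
The plan is to obtain the statement by combining three ingredients---the Hermitian dual-containment structure, the parameters $[kn,\sum_{i=1}^{k}t_{i}]$, and the minimum-distance lower bound---and to treat parts 1) and 2) in parallel, since they share identical length, dimension and distance estimates and differ only in which clause of Corollary 3.5 supplies the structural property.

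First I would fix the structural property. The hypothesis that $AA^{\dag}$ is invertible diagonal is exactly the hypothesis of Corollary 3.5 (equivalently, $AA^{\dag}=DP_{\tau}$ with $P_{\tau}=I_{k}$, so $\tau=\mathbbm{1}_{k}$). Hence Corollary 3.5(1) gives directly that $\mathcal{C}_{A,k}$ is HDC under the hypothesis of part 1) that every $\mathcal{C}_{i}$ is HDC, while Corollary 3.5(2) gives that $\mathcal{C}_{A,k}$ is AHDC under the hypothesis of part 2) that exactly one $\mathcal{C}_{j}$ is AHDC and the remaining constituents are HDC.

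Next I would record the parameters, which are common to both parts. The length $kn$ is immediate from Definition 1.1, where a codeword is the $n\times k$ array read in column-major order. For the dimension, the identity $\det(AA^{\dag})=(\det A)^{q+1}$ shows that $AA^{\dag}$ invertible forces $A$ invertible, so $A$ has full row rank $k$ and the standard dimension formula for MP codes with a full-row-rank defining matrix yields $\dim(\mathcal{C}_{A,k})=\sum_{i=1}^{k}t_{i}$. For the minimum distance I would invoke the established lower bound $d(\mathcal{C}_{A,k})\geq\min_{1\leq i\leq k}\{D_{i}(A)d_{i}\}$ for matrix-product codes (as in the works cited in the Introduction), valid here because $A$ has full rank, with $D_{i}(A)$ the minimum distance of the code generated by the first $i$ rows of $A$; when $A$ is moreover NSC, the known identity $D_{i}(A)=k-i+1$ specialises this to $\min_{1\leq i\leq k}\{(k-i+1)d_{i}\}$, giving the sharpened parameters.

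I do not anticipate a genuine obstacle: the theorem is essentially a repackaging of Corollary 3.5 together with two standard facts about matrix-product codes, namely the dimension formula and the distance bound. The only points that demand a line of care are the two elementary reductions $AA^{\dag}\ \text{diagonal}\Longrightarrow\tau=\mathbbm{1}_{k}$ and $AA^{\dag}\ \text{invertible}\Longrightarrow A\ \text{invertible}$, since these are precisely what license the appeal to the trivial case of Corollary 3.5 and to the full-rank dimension and distance statements.
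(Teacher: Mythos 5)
Your proposal is correct and follows essentially the same route as the paper, whose own proof is just the one-line remark that the result follows from the previous lemmas: the structural HDC/AHDC property comes from Corollary 3.5 (parts 1) and 2) respectively), and the parameters come from the standard facts that a full-row-rank $k\times k$ defining matrix gives length $kn$ and dimension $\sum_{i=1}^{k}t_{i}$, together with the known bound $d\geq\min_{i}\{D_{i}(A)d_{i}\}$ and its NSC specialisation $D_{i}(A)=k-i+1$. Your two explicit reductions ($AA^{\dag}$ invertible diagonal means $\tau=\mathbbm{1}_{k}$, and $\det(AA^{\dag})=(\det A)^{q+1}$ forces $A$ invertible) are exactly the details the paper leaves implicit.
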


\begin{proof}
According to previous lemmas, the result follows. $\hfill\square$
\end{proof}

\vspace{5pt}

For the nontrivial case, i.e., $AA^{\dag}$ is $\tau$-monomial with $\tau\neq\mathbbm{1}_{k}$, we give the corresponding method
for constructing HDC and AHDC MP codes in the following theorem.

\begin{theorem}
(Nontrivial case)
Let $AA^{\dag}$ be $\tau$-monomial for $A\in\mathcal{M}_{k}(\mathbb{F}_{q^{2}})$, where $\tau=(i_{1}\ j_{1})(i_{2}\ j_{2})\cdots(i_{s}\ j_{s})$
denotes the product of $s$ disjoint 2-cycles for $s\in\{1,\ldots,\lfloor\frac{k}{2}\rfloor\}$.
Let $\mathcal{F}=\{1,\ldots,k\}-\{i_{1},\ldots,i_{s},j_{1},\ldots,j_{s}\}$.
Let $\mathcal{C}_{A,k}=[\mathcal{C}_{1},\mathcal{C}_{2},\ldots,\mathcal{C}_{k}]\cdot A$, where $\mathcal{C}_{i}$ is an $[n,t_{i},d_{i}]_{q^{2}}$ code for $i=1,\ldots,k$.
Then the following statements hold.
\begin{itemize}
\item [1)] If $\mathcal{C}_{i_{r}}^{\bot_{H}}\subseteq\mathcal{C}_{j_{r}}$ for all $r\in\{1,\ldots,s\}$
and $\mathcal{C}_{i}$ is HDC for all $i\in\mathcal{F}$, then $\mathcal{C}_{A,k}$ is a
$\big[kn,\sum_{i=1}^{k}t_{i},\geq\min\limits_{\scriptscriptstyle 1\leq i\leq k}\{D_{i}(A)d_{i}\}\big]_{q^{2}}$ HDC code.
Furthermore, if $A$ is also NSC, then $\mathcal{C}_{A,k}$ is a
$\big[kn,\sum_{i=1}^{k}t_{i},\geq\min\limits_{\scriptscriptstyle 1\leq i\leq k}\{(k-i+1)d_{i}\}\big]_{q^{2}}$ HDC code.

\vspace{-4pt}

\item [2)] If $\mathcal{F}\neq\varnothing$, $\mathcal{C}_{i_{r}}^{\bot_{H}}\subseteq\mathcal{C}_{j_{r}}$ for all $r\in\{1,\ldots,s\}$,
and there is a unique $j\in\mathcal{F}$ such that $\mathcal{C}_{j}$ is AHDC and $\mathcal{C}_{i}$ is HDC for all $i\in\mathcal{F}\backslash\{j\}$,
then $\mathcal{C}_{A,k}$ is a $[kn,\sum_{i=1}^{k}t_{i},\geq\min\limits_{\scriptscriptstyle 1\leq i\leq k}\{D_{i}(A)d_{i}\}\big]_{q}$ AHDC code.
Furthermore, if $A$ is also NSC, then $\mathcal{C}_{A,k}$ is a
$\big[kn,\sum_{i=1}^{k}t_{i},\geq\min\limits_{\scriptscriptstyle 1\leq i\leq k}\{(k-i+1)d_{i}\}\big]_{q}$ AHDC code.
\end{itemize}
\end{theorem}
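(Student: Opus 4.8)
The plan is to split the statement into two logically separate pieces: the algebraic property (that $\mathcal{C}_{A,k}$ is HDC in 1), resp. AHDC in 2)) and the parameter claim $\big[kn,\sum_{i=1}^{k}t_{i},\ \geq\min_{1\leq i\leq k}\{D_{i}(A)d_{i}\}\big]_{q^{2}}$ (together with its NSC refinement). The property piece I would reduce entirely to Theorem 4.4, and the parameter piece to the non-singularity of $A$ combined with the standard minimum-distance estimates for MP codes. So no fundamentally new argument is needed; the work is in correctly matching hypotheses and invoking the right prior results.

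First I would reconcile the hypothesis here with the criterion of Theorem 4.4. Theorem 4.4(1) is phrased in terms of $\mathcal{C}_{j_{r}}^{\bot_{H}}\subseteq\mathcal{C}_{i_{r}}$, whereas the present hypothesis reads $\mathcal{C}_{i_{r}}^{\bot_{H}}\subseteq\mathcal{C}_{j_{r}}$. The key observation is that these are equivalent: applying $\bot_{H}$ to $\mathcal{C}_{i_{r}}^{\bot_{H}}\subseteq\mathcal{C}_{j_{r}}$, and using that the Hermitian dual reverses inclusions together with $(\mathcal{C}^{\bot_{H}})^{\bot_{H}}=\mathcal{C}$, yields $\mathcal{C}_{j_{r}}^{\bot_{H}}\subseteq\mathcal{C}_{i_{r}}$, and the converse is identical. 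This is natural, since $(i_{r}\ j_{r})$ and $(j_{r}\ i_{r})$ denote the same transposition appearing in $\tau$. With this translation, statement 1) becomes exactly Theorem 4.4(1), and statement 2), under the additional data $\mathcal{F}\neq\varnothing$ with a unique AHDC constituent $\mathcal{C}_{j}$ and the remaining $\mathcal{C}_{i}$ ($i\in\mathcal{F}\backslash\{j\}$) HDC, becomes exactly Theorem 4.4(2). Hence the HDC/AHDC conclusions follow with no extra computation.

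For the parameters I would argue as follows. The length $kn$ is immediate, as $A\in\mathcal{M}_{k}(\mathbb{F}_{q^{2}})$ and each $\mathcal{C}_{i}$ has length $n$. For the dimension $\sum_{i=1}^{k}t_{i}$, I would first note that $AA^{\dag}=DP_{\tau}$ is monomial, hence invertible; since $\mathrm{rank}(AA^{\dag})\leq\mathrm{rank}(A)$, this forces $\mathrm{rank}(A)=k$, i.e. $A$ is non-singular, and then the standard MP-code dimension formula gives $\dim(\mathcal{C}_{A,k})=\sum_{i=1}^{k}t_{i}$. For the distance, I would invoke the classical MP-code bound $d(\mathcal{C}_{A,k})\geq\min_{1\leq i\leq k}\{D_{i}(A)d_{i}\}$, where $D_{i}(A)$ is the minimum distance of the code spanned by the first $i$ rows of $A$; and when $A$ is moreover NSC one has $D_{i}(A)=k-i+1$, which upgrades the bound to $\min_{1\leq i\leq k}\{(k-i+1)d_{i}\}$. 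Assembling the property with these three parameters completes both statements.

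The genuine content is light: the theorem is essentially Theorem 4.4 packaged together with standard structural facts about MP codes. Consequently I do not expect a substantive obstacle. The two points that require care are the dual-containment translation of the second paragraph (so that the stated hypothesis really does trigger Theorem 4.4) and the implication ``$AA^{\dag}$ invertible $\Rightarrow$ $A$ non-singular'', which is exactly what guarantees that the dimension is the full sum $\sum_{i=1}^{k}t_{i}$; I would make sure both are stated explicitly and that the minimum-distance estimate for NSC matrices is cited precisely.
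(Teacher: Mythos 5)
Your proposal is correct and follows the same route the paper intends: the paper's own proof is just the one-line remark that the result follows from the previous lemmas, and you supply exactly the right details---the duality translation $\mathcal{C}_{i_{r}}^{\bot_{H}}\subseteq\mathcal{C}_{j_{r}}\Leftrightarrow\mathcal{C}_{j_{r}}^{\bot_{H}}\subseteq\mathcal{C}_{i_{r}}$ that matches the hypothesis to Theorem 4.4, plus the invertibility of $A$ (from $AA^{\dag}$ being monomial) for the dimension and the standard $D_{i}(A)$/NSC minimum-distance bounds. No gaps; your write-up is in fact more complete than the paper's.
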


\begin{proof}
According to previous lemmas, the result follows. $\hfill\square$
\end{proof}

\section{Conclusion}

In this paper, we provided an explicit formula for calculating the dimension of the Hermitian hull of a MP code.
We presented the necessary and sufficient conditions for a MP code to be HDC, AHDC, HSO, AHSO and HLCD, respectively.
We theoretically determined the number of all possible ways involving the relationships among the constituent codes to yield a MP code
that is HDC, AHDC, HSO and AHSO, respectively.
We gave alternative necessary and sufficient conditions for a MP code to be AHDC and AHSO, respectively,
and provided several cases where a MP code is not AHDC or AHSO.
We gave the construction methods of HDC and AHDC MP codes.

We emphasize that in the near future, we will delve deeper into the matrices considered in this paper and provide further results regarding these matrices.

\footnotesize{
\bibliographystyle{plain}
\phantomsection
\addcontentsline{toc}{section}{References}
\bibliography{2024.5.11}

\begin{thebibliography}{10}

\bibitem{Blackmore2001Matrix-product}
T.~Blackmore and G.~H. Norton.
\newblock Matrix-product codes over $\mathbb{F}_{q}$.
\newblock {\em Appl. Algebra Eng. Commun. Comput.}, 12(6):477--500, 2001.

\bibitem{Cao2024On}
M.~Cao.
\newblock On dual-containing, almost dual-containing matrix-product codes and
  related quantum codes.
\newblock {\em Finite Fields Appl.}, 96(102400):1--34, 2024.

\bibitem{Cao2023On}
M.~Cao, J.~Yang, and F.~Wei.
\newblock On the $\sigma$ duals and $\sigma$ hulls of linear codes.
\newblock {\em Cryptogr. Commun.}, pages 1--24, 2023.

\bibitem{Cao2018Matrix}
Y.~Cao, Y.~Cao, and F.-W. Fu.
\newblock Matrix-product structure of constacyclic codes over finite chain
  rings $\mathbb{F}_{p^ m}[u]/\langle u^e\rangle$.
\newblock {\em Appl. Algebra Eng. Commun. Comput.}, 29(6):455--478, 2018.

\bibitem{Cascudo2020Squares}
I.~Cascudo, J.~S. Gundersen, and D.~Ruano.
\newblock Squares of matrix-product codes.
\newblock {\em Finite Fields Appl.}, 62:101606, 2020.

\bibitem{Chen2023New}
B.~Chen, G.~Zhang, and W.~Li.
\newblock New optimal linear codes with hierarchical locality.
\newblock {\em IEEE Trans. Inf. Theory}, 69(3):1544--1550, 2023.

\bibitem{Fan2014Homogeneous}
Y.~Fan, S.~Ling, and H.~Liu.
\newblock Homogeneous weights of matrix product codes over finite principal
  ideal rings.
\newblock {\em Finite Fields Appl.}, 29:247--267, 2014.

\bibitem{Fan2014Matrix}
Y.~Fan, S.~Ling, and H.~Liu.
\newblock Matrix product codes over finite commutative {Frobenius} rings.
\newblock {\em Des. Codes Cryptogr.}, 71(2):201--227, 2014.

\bibitem{Galindo2015New}
C.~Galindo, F.~Hernando, and D.~Ruano.
\newblock New quantum codes from evaluation and matrix-product codes.
\newblock {\em Finite Fields Appl.}, 36:98--120, 2015.

\bibitem{Guenda2020Linear}
K.~Guenda, T.~A. Gulliver, S.~Jitman, and S.~Thipworawimon.
\newblock Linear $\ell$-intersection pairs of codes and their applications.
\newblock {\em Des. Codes Cryptogr.}, 88(1):133--152, 2020.

\bibitem{Guenda2018}
K.~Guenda, S.~Jitman, and T.~A. Gulliver.
\newblock Constructions of good entanglement-assisted quantum error correcting
  codes.
\newblock {\em Des. Codes Cryptogr.}, 86(1):121--136, 2018.

\bibitem{Hernando2012List}
F.~Hernando, T.~H{\o}holdt, and D.~Ruano.
\newblock List decoding of matrix-product codes from nested codes: an
  application to quasi-cyclic codes.
\newblock {\em Adv. Math. Commun.}, 6(3):259--272, 2012.

\bibitem{Hernando2009Construction}
F.~Hernando, K.~Lally, and D.~Ruano.
\newblock Construction and decoding of matrix-product codes from nested codes.
\newblock {\em Appl. Algebra Eng. Commun. Comput.}, 20:497--507, 2009.

\bibitem{Hernando2013Decoding}
F.~Hernando and D.~Ruano.
\newblock Decoding of matrix-product codes.
\newblock {\em J. Algebra Appl.}, 12(4):1250185, 2013.

\bibitem{Liu2020Galois}
H.~Liu and X.~Pan.
\newblock Galois hulls of linear codes over finite fields.
\newblock {\em Des. Codes Cryptogr.}, 88(2):241--255, 2020.

\bibitem{Liu2018On}
X.~Liu, H.~Q. Dinh, H.~Liu, and L.~Yu.
\newblock On new quantum codes from matrix product codes.
\newblock {\em Cryptogr. Commun.}, 10(4):579--589, 2018.

\bibitem{Luo2022Three}
G.~Luo, M.~F. Ezerman, and S.~Ling.
\newblock Three new constructions of optimal locally repairable codes from
  matrix-product codes.
\newblock {\em IEEE Trans. Inf. Theory}, 69(1):75--85, 2022.

\bibitem{O2002Note}
F.~{\"{O}}zbudak and H.~Stichtenoth.
\newblock Note on {Niederreiter-Xing's} propagation rule for linear codes.
\newblock {\em Appl. Algebra Eng. Commun. Comput.}, 13(1):53--56, 2002.

\bibitem{Asch2008Matrix}
B.~van Asch.
\newblock Matrix-product codes over finite chain rings.
\newblock {\em Appl. Algebra Eng. Commun. Comput.}, 19(1):39--49, 2008.

\bibitem{Zhang2015qc}
T.~Zhang and G.~Ge.
\newblock Quantum codes from generalized {Reed-Solomon} codes and
  matrix-product codes.
\newblock {\em arXiv: 1508.00978}, 2015.

\end{thebibliography}
}

\end{document}